\documentclass[journal]{IEEEtran}

\ifCLASSINFOpdf
\else
\fi
\usepackage{url}
\usepackage{graphicx}
\usepackage{amssymb}
\usepackage[english]{babel}
\usepackage{amsthm}
\theoremstyle{definition}
\usepackage{mathtools}
\usepackage{multirow}
\usepackage{multicol}
\usepackage{xcolor}
\usepackage{cite}
\usepackage[font=footnotesize]{subcaption}
\usepackage{longtable}
\usepackage{physics}
\usepackage{float}
\usepackage{cite}
\usepackage{float}
\usepackage{amsmath}
\usepackage{mathtools}
\usepackage{enumitem}
\usepackage{multirow}
\usepackage{balance}
\usepackage{multirow}
\hyphenation{op-tical net-works semi-conduc-tor}
\newtheorem{theorem}{Property}

\usepackage[belowskip=-4pt,aboveskip=5pt]{caption}


\begin{document}
\title{Passivity-based Local  Criteria for   Stability  of Power Systems with Converter-Interfaced Generation}
\author{\IEEEauthorblockN{Kaustav Dey and A. M. Kulkarni}
\thanks{Kaustav Dey and A. M. Kulkarni are with the Department of Electrical Engineering, Indian Institute of Technology, Bombay, India. email: \texttt{kaustavd@iitb.ac.in, anil@ee.iitb.ac.in}}}
\maketitle
\begin{abstract}
With the increasing penetration of converter-interfaced distributed generation systems, it would be advantageous to specify local compliance criteria for these devices to ensure the small-signal stability of the interconnected system. Passivity of the device admittance, which is an example of a local criterion, has been used previously to avoid resonances between these devices and the lightly damped oscillatory modes of the network. Typical active and reactive power control strategies like droop control and virtual synchronous generator control inherently violate the  passivity constraints on admittance at low frequencies, although this does not necessarily mean that the interconnected system will be unstable. Therefore,  passivity of the admittance is unsuitable as a stability criterion   for devices that are represented by their wide-band models. To overcome this problem, this paper proposes the  use of criteria based on admittance at higher frequencies and an alternative transfer function at lower frequencies. The alternative representation uses active and reactive power and the derivatives of the polar components of voltage as interface variables. To allow for the separate  analysis at low and high frequencies,    the device dynamics should  exhibit a slow-fast separation; this is proposed as an additional constraint. Adherence to the proposed  criteria is not onerous and is easily verifiable through frequency response analysis. 
\end{abstract}
\begin{IEEEkeywords}
Distributed  Generation systems, Passive Systems, Droop Control, Virtual Synchronous Generator control.
\end{IEEEkeywords}
\IEEEpeerreviewmaketitle
\section{Introduction}
A power system consists of a large number of devices like conventional and renewable energy generators, storage systems, loads, FACTS and HVDC converters, which are connected to the Transmission \& Distribution (T\&D) network. Sometimes,  adverse dynamic interactions between these devices and the network occur, leading to oscillatory instabilities such as power swings,  Sub-synchronous Resonance (SSR), Sub-synchronous Control Interactions (SSCI), Sub-synchronous Torsional Interactions (SSTI), Harmonic Resonances and Induction Generator effect~\cite{kundur1994power,irwin2011ssci,padiyar2012analysis,ainsworth1967harmonic}. These instabilities may be anticipated and mitigated with the help of stability assessment tools like time-domain simulation and eigenvalue analysis.
\par Stability assessment generally needs to consider a large set of operating conditions of the network and the connected devices.  A fresh analysis has to be done whenever the connection of a new device to the network is contemplated.
Besides this, detailed information of the control structures and parameters are needed to carry out stability  analysis. Some of these challenges can be alleviated by the use of a reduced system for the analysis. However, significant effort and engineering  judgement are involved in determining the extent of the reduced study zone and the representation of the rest of the system  by equivalents at its boundaries. The problem has become even more challenging with the proliferation of distributed generation systems, many of which have non-standardized controllers. Therefore, the possibility of specifying {\em local}~(decentralized) criteria which, if complied with by individual devices, would ensure the stability of the interconnected system, has great appeal.
\par Although local criteria are likely to be conservative (sufficient, but not necessary for ensuring stability), this is acceptable as long as the criteria are not onerous or infeasible. The criteria should be such that  compliance is easily verifiable through time-domain or frequency-domain analyses  based on models or measurements.
 The main objective of this paper is to develop a set of criteria which  has these  attributes. Passivity~\cite{khalil_non_linear} is a  property of dynamical systems on which such criteria can be based. It is well-suited for this role  because of the following reasons:
\par (a) Passive systems are stable.
\par (b) The system formed by the connection of passive sub-systems is also passive. Therefore the  passivity of each sub-system can  be individually ascertained with minimal information about the other sub-systems. If all connected sub-systems are passive, then unstable interactions among them  are automatically precluded.
 \begin{figure}[h]
 \centering
 \includegraphics[width=0.44\textwidth]{./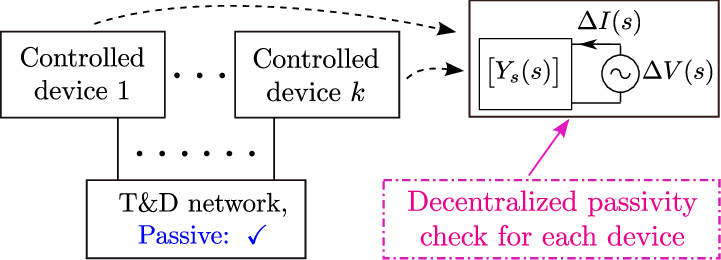}
 \caption{Passivity based stability assessment}
 \label{Fig:network_layout_intro} 
 \end{figure}
\par (c) The model of a T\&D network consisting of transmission lines, transformers, capacitors and inductors is inherently passive when it is formulated with currents and voltages as interface (input or output)  variables. Topological changes in the network (for example, due to the addition/tripping of lines) or changes in the operating conditions do not affect the passivity of the network.  The main objective then is to have each  of the sub-systems that are connected to the T\&D network locally conform to  the passivity constraint, as  depicted in Figure~\ref{Fig:network_layout_intro}.
\par (d) The definition of a sub-system  can be flexible;  it could be an individual device or may encompass a sub-network with several devices. What matters is whether the sub-system is passive  as seen from its boundary nodes. 
\par (e)  For linear time-invariant (LTI) systems, passivity can be conveniently ascertained using frequency domain conditions. Note that LTI models are appropriate for the study of many adverse interactions, as these can often be traced to  small-signal instabilities. 
\par The ``self-disciplined stabilization'' concept  presented in~\cite{passivity_dc_microgrid} is similar to the scheme  envisaged in Figure~\ref{Fig:network_layout_intro}, but is applied to the Single-Input Single-Output devices in a dc microgrid. In three-phase ac  systems, the models are Multi-Input Multi-Output, and the analysis is more involved. Passivity  has  been applied in a limited and approximate manner for the stability analysis of voltage source converter~(VSC) based  devices~\cite{blaabjerg_active_damping_converters,blaabjerg_passivity_lcl_filter,harnefors_frequency_passivity,passivity_enhancement_active_damping, harnefors2007input,harnefors2015review} and HVDC converters~\cite{mkdas_screening}. 
In these papers, passivity is sought to be achieved locally around the network resonant frequencies only, and is  not explored over the entire frequency domain. This requires knowledge of the parameters of the grid and other devices connected to it, because the combined system determines the  resonant frequencies.

\par The passivity of converters and synchronous machines with their controllers are analysed  in \cite{kaustav_jepes}. The synchronously rotating (D-Q) coordinate system is found to be  convenient for developing the LTI models used in the analysis. Passivity of the admittance in D-Q variables appears to be a  reasonable and achievable objective for  the high-frequency models of the devices.  It is also shown in the same paper that the admittances  of typical active and reactive power injection devices inherently violate  the passivity constraints at low frequencies, although this does not imply instability. Therefore, the passivity constraints on admittance are too restrictive for wide-band device models.  To overcome the aforementioned difficulty, this paper proposes an  expanded set of criteria  which preserves the decentralized nature of the scheme. These  criteria are given below:
\begin{enumerate}[leftmargin=*]
    \item The poles of the device transfer function should be separable into two  well-separated clusters based on their magnitude. In other words, the natural transients should exhibit a slow-fast  separation in the time domain (low and high frequency separation in the frequency domain). 
    \item The admittance transfer function should satisfy the frequency-domain passivity conditions in the high frequency range.
    \item The transfer function between the derivatives of the polar coordinates of voltage and the active and reactive power drawn  should satisfy the frequency-domain passivity conditions in the low frequency range.
\end{enumerate}
\par Since most control strategies and transients in a power system exhibit time-scale separation characteristics, the first criterion is not unreasonable. Avoidance of grid resonances in the high frequency domain through passivity has already been demonstrated in earlier work. It is shown in this paper that in the low frequency range,  with the specified set of interface variables,  devices having typical droop-control or virtual synchronous machine characteristics can be made  passive. 
\par Compliance with  the criteria can be verified numerically using frequency-domain techniques. Where analytical models or the internal details of the device and controller are not available, black-box simulation models or measurements may be used for obtaining the frequency responses. A case study of a converter-interfaced device is presented to illustrate the feasibility of this scheme.

\section{Passivity Definition} \label{Sec:passive_system}
A system having a set of inputs $u(t)$ and an equal number of outputs $y(t)$ is said to be passive~\cite{khalil_non_linear} if,
\begin{equation}
u(t)^T y(t) \ge \dv{S(x)}{t}  \label{Eq:passive_storage_func}
\end{equation}
for all $u(t)$ and $x(t)$, and for all $t \geq 0$, and where $S(x)$ is a continuously differentiable positive semi-definite function of the states $x$. $S(x)$ is also called the \textit{storage function}.
\subsection{Passivity of LTI Systems (Time-Domain Conditions)} \label{Sec:PR_ss}
A LTI system represented by the state space model ($A$, $B$, $C$, $D$), is passive if there exist matrices $P$, $Q$ and $W$ such that~\cite{khalil_non_linear}
\begin{equation}
    \begin{aligned} \label{Eq:kyp_lemma}
P A + A^T P &= -Q^TQ, \\ PB = C^T - Q^TW& \;\;\text{and} \;\;D+D^T = W^TW
\end{aligned}
\end{equation}
where the matrix $P$ is symmetric positive definite.
\subsection{Passivity of LTI Systems (Frequency-Domain Conditions)} \label{Sec:PR_sys1}
\par A LTI system represented by a $n \times n$ rational, proper transfer function matrix $G(s)$ is passive if
\begin{enumerate}[leftmargin=*]
    \item there are no poles in the right half $s$-plane~(complex plane).
    \item The matrix $G^\mathcal{R}(j\Omega) = G(j\Omega) + G^H(j\Omega) = G(j\Omega) + G^T(-j\Omega)$ is positive semi-definite  for all $\Omega \in (-\infty, \infty)$ which is not a pole of $G(s)$\footnote{The superscripts $T$ and $H$ denote the transpose and conjugate-transpose operations respectively.}. $G^\mathcal{R}(j\Omega)$ is a Hermitian matrix, and therefore it will always have real eigenvalues. These should be non-negative for positive semi-definiteness~\cite{watkins2004fundamentals}.
    \item For all $j \Omega_p$ that are poles of $G(s)$, the poles must be simple~\cite{katsuhiko2010modern} and $\lim_{s \to j\Omega_p} (s-j \Omega_p)\;G(s)$ should be positive semi-definite Hermitian.
\end{enumerate}
 For LTI systems, the frequency domain conditions are equivalent to those given in~\eqref{Eq:passive_storage_func} and~\eqref{Eq:kyp_lemma}. Unlike~\eqref{Eq:passive_storage_func} and~\eqref{Eq:kyp_lemma} which require us to find a suitable storage function $S(x)$ or a set of matrices ($P$, $Q$ and $W$) - which is not straight forward -  the frequency domain conditions  are convenient  from a practical perspective since they can be directly evaluated. 
\subsection{Useful Properties}
The usefulness of passivity criterion for stability assessment stems from the following  properties.
\begin{enumerate}[label=\alph*),leftmargin=*]
    \item Passivity is a {\em sufficient} condition for stability.
     \item The inverse of a passive system is also passive, assuming that the state-space representation of the inverse system is well-defined.
    \item A system formed by a negative feedback connection of passive sub-systems is also passive.
\end{enumerate} 

\section{Application to Power Systems}
 In this section, we outline the important steps and issues in the application of the  passivity criterion~(see ~\cite{kaustav_jepes} for  a detailed exposition).
 The first step in the application of passivity criteria is the formation of LTI models of the devices. Most power system devices are time-invariant or approximately so when expressed in the D-Q-o variables that are obtained using a synchronously rotating transformation~\cite{padiyar_psdc} of the phase (a-b-c) variables.  The models are linearized around a few operating points which are representative of the range of the device's operation. The passivity check for the device has to be applied at each of these operating points.
\par The terminal currents and voltages are the ``natural'' interface (input and output) variables in an electrical system, as depicted in Figure~\ref{Fig:network_layout_intro}. With these interface variables, a T\&D network consisting of transmission lines, transformers, reactors and capacitors is inherently passive.  The storage function for this system can be chosen to be the electro-magnetic energy stored in the inductive and capacitive components, which is dissipated in the resistive  parts of these components.  Topological changes in the network (due to the addition or removal of lines), or changes in the operating conditions do not affect the passivity of the network.
 \begin{figure}[h]
 \centering
\includegraphics[width=0.44\textwidth]{./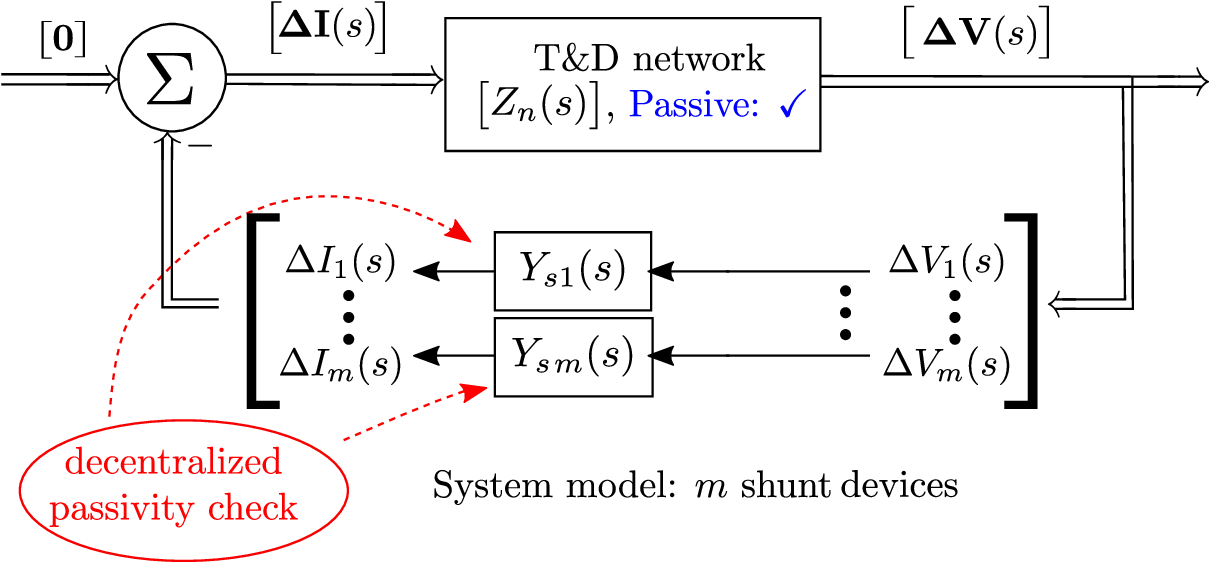}
\caption{Transfer function representation of the system}
\label{Fig:network_schematic_composite}
\end{figure}
 \par The inherent passivity of the T\&D network facilitates the decentralized scheme envisaged in Figure~\ref{Fig:network_layout_intro}, as the task reduces to local assessment of the passivity of the  devices/sub-systems connected to the network. The system shown in Figure~\ref{Fig:network_layout_intro} can be represented as a transfer function block diagram as shown in Figure~\ref{Fig:network_schematic_composite}.
 \par For the $i^{th}$ single-port three-phase shunt device, $\Delta I_i=[\Delta i_{Di}\; \Delta i_{Qi}]^T$ and $\Delta V_i=[\Delta v_{Di}\;\Delta v_{Qi}]^T$ denote the D-Q components of the terminal currents and voltages respectively\footnote{The zero sequence variables are generally stable, decoupled from the D-Q variables,  and localized to a small part of the network. Therefore they are not considered in this analysis.}, and $\Delta I_i(s)$ = $Y_{si}(s)$ $\Delta V_i(s)$ defines the admittance matrix of the device. The transfer functions $Y_{si}(s)$ are matrices of size $2\times2$ for single-port three-phase devices.
 \par Once the admittance transfer function matrices are available, passivity can be assessed by checking their compliance to the conditions given in Section~\ref{Sec:PR_sys1}.  The frequency response of the transfer function matrices can be obtained either from an analytical model, or from experimental measurements, or applying the frequency scanning technique~\cite{mkdas_screening} to a simulation model of the system. 
 
 \section{Limitation of D-Q Admittance Formulation}
  The  scheme depicted in Figure~\ref{Fig:network_layout_intro} is based on the premise that it is possible to make all devices or sub-systems passive, if  they are not so to begin with. The results in \cite{blaabjerg_active_damping_converters}--\cite{kaustav_jepes} indicate that the frequency domain conditions of Section~\ref{Sec:PR_sys1} can usually be satisfied by the D-Q admittance of typical power system components in the high frequency range. However,  it is also shown in~\cite{kaustav_jepes} that the passivity conditions on the D-Q  admittance  are invariably violated in the low frequency range by many devices. For converter-interfaced devices, this is due to the commonly-used active and reactive power control strategies, as discussed  in the following sub-sections.
 \subsection{Droop Control Strategy}
The polar components of the bus voltage~($\phi,V_n$) and the active~($P$) and reactive power~($Q$) absorbed by the device are related to the D-Q voltage and current components as follows.
\begin{equation} 
\begin{aligned}
\label{Eq:cart_2_pol}
\phi &= \tan^{-1} \left( \frac{v_D}{v_Q} \right),\, V =  \sqrt{v_D^2+v_Q^2},\, V_n = \frac{V}{V_o} \\ P &= v_D i_D + v_Q i_Q, \, Q = v_D i_Q - v_Q i_D
\end{aligned}
\end{equation}
The subscript `$o$' denotes the quiescent value. The steady state active and reactive power drawn or injected by a wide class of power system components are often functions of frequency and voltage magnitude. The quasi-static model of a controlled device which follows a droop  strategy is as follows.
\begin{equation} \label{Eq:pf_droop}
\Delta P(s) = k_{pf} \, \Delta  \tilde{\omega}(s)  = k_{pf} \frac{s\;\Delta \phi(s)}{1+s \tau} , \,\, \Delta Q(s) = k_{qv}\, \Delta V_n(s)
\end{equation}
 $\tau$, which is the time constant of the transfer function to approximately compute the derivative of the phase angle, is generally quite small. 
 \begin{theorem} \label{Th:droop_lf_non_passivity}
The admittance  corresponding to the small-signal model of a converter which emulates~\eqref{Eq:pf_droop} is non-passive.
\end{theorem}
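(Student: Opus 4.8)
The plan is to derive the $2\times 2$ D-Q admittance $Y_s(s)$ explicitly from the quasi-static droop law \eqref{Eq:pf_droop} and then show that it violates the positive semi-definiteness requirement of Section~\ref{Sec:PR_sys1}. First I would linearize the polar-to-Cartesian relations \eqref{Eq:cart_2_pol} about a quiescent point $(v_{Do},v_{Qo},i_{Do},i_{Qo})$. Writing $\Delta V=[\Delta v_D\;\Delta v_Q]^T$ and $\Delta I=[\Delta i_D\;\Delta i_Q]^T$, the voltage map gives $[\Delta\phi\;\Delta V_n]^T = T\,\Delta V$ and the power map gives $[\Delta P\;\Delta Q]^T = N\,\Delta V + M\,\Delta I$, where
\begin{equation*}
M=\begin{bmatrix} v_{Do} & v_{Qo}\\ -v_{Qo} & v_{Do}\end{bmatrix},\;\; N=\begin{bmatrix} i_{Do} & i_{Qo}\\ i_{Qo} & -i_{Do}\end{bmatrix},\;\; T=\frac{1}{V_o^2}\begin{bmatrix} v_{Qo} & -v_{Do}\\ v_{Do} & v_{Qo}\end{bmatrix},
\end{equation*}
with $\det M = V_o^2$. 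Substituting the droop law as $[\Delta P\;\Delta Q]^T = D_k\,[\Delta\phi\;\Delta V_n]^T = D_k T\,\Delta V$, with $D_k=\mathrm{diag}\!\left(k_{pf}\tfrac{s}{1+s\tau},\,k_{qv}\right)$, and eliminating $\Delta P,\Delta Q$ yields the closed form $Y_s(s)=M^{-1}\!\left(D_k T - N\right)$.

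The crux of the argument is a trace computation. I would show that $\operatorname{tr}\!\left(M^{-1}N\right)=0$ and $\operatorname{tr}\!\left(M^{-1}D_k T\right)=0$ for every $s$, purely from the algebraic structure of $M$, $N$ and $T$: the scaled-rotation form of $M^{-1}$ makes the two diagonal contributions cancel in pairs. Consequently $\operatorname{tr} Y_s(s)\equiv 0$. Evaluating on the imaginary axis, the Hermitian part then obeys $\operatorname{tr} Y_s^{\mathcal R}(j\Omega)=\operatorname{tr} Y_s(j\Omega)+\overline{\operatorname{tr} Y_s(j\Omega)}=0$ at all $\Omega$.

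A Hermitian matrix with zero trace has eigenvalues $\lambda$ and $-\lambda$, so its determinant equals $-\lambda^2\le 0$, and it is positive semi-definite only if it is the zero matrix. It therefore suffices to exhibit a single frequency at which $Y_s^{\mathcal R}(j\Omega)\neq 0$: there it carries a strictly negative eigenvalue, condition~2 of Section~\ref{Sec:PR_sys1} fails, and the admittance is non-passive. This is immediate whenever the quiescent power is nonzero, since the diagonal entries of $Y_s$ evaluate to $\pm i_{Qo}/V_o$; and even in the degenerate case $i_{Qo}=0$, the frequency-dependent off-diagonal term $k_{pf}\tfrac{s}{1+s\tau}$ prevents $Y_s^{\mathcal R}$ from vanishing identically. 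I expect the main obstacle to be essentially bookkeeping, namely carrying the linearization cleanly enough that the two traces manifestly cancel, together with the minor step of confirming that the excluded case $Y_s^{\mathcal R}\equiv 0$ corresponds only to a degenerate operating point rather than to a genuinely passive device.
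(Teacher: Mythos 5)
Your argument is sound, but note that the paper itself does not prove Property~\ref{Th:droop_lf_non_passivity} in this manuscript --- it defers the proof to the reference \cite{kaustav_jepes} and only illustrates the violation numerically via the eigenvalues of $Y_s^{\mathcal{R}}(j\Omega)$ in Figure~\ref{fig:lf_passivity_all_devices}. So your derivation is an independent, self-contained route. It is, however, very much in the spirit of the technique the authors do use elsewhere: in Appendix~\ref{AppSec:proof_rlc_jacobian_non_passive} they prove Properties~\ref{Th:rlc_jacobian_non_passive} and~\ref{Th:rlc_jacobian_der_non_passive} by showing that $D_j+D_j^T$ has zero trace and hence cannot be positive semi-definite unless it vanishes. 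You apply the same trace-annihilation idea, but to the full frequency-dependent $Y_s(s)=\mathcal{E}^{-1}(D_k T-\mathcal{C})$ rather than to a feedthrough matrix, which buys you a stronger conclusion: the idealized quasi-static droop model violates condition~(2) of Section~\ref{Sec:PR_sys1} at essentially \emph{every} frequency where $Y_s^{\mathcal{R}}(j\Omega)\neq 0$, not merely in the low-frequency band emphasized in the text. Your linearization matrices agree with $\mathcal{E}$, $\mathcal{C}$, $\mathcal{F}$ of~\eqref{Eq:vc_ic_vt} (your $T$ is $\mathcal{F}^T/V_o^2$ up to the $V_n$ normalization), and both trace cancellations check out, e.g.\ $\operatorname{tr}(M^{-1}D_kT)=\operatorname{tr}(TM^{-1}D_k)$ with $TM^{-1}$ a scaled $90^\circ$ rotation.

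One small imprecision to repair: the claim that the diagonal entries of $Y_s$ equal $\pm i_{Qo}/V_o$ holds only in the frame aligned so that $v_{Do}=0$, $v_{Qo}=V_o$; in a general frame each diagonal entry picks up the frequency-dependent term $\pm(d_1-d_2)v_{Do}v_{Qo}/V_o^4$. This does not damage the proof --- you are free to choose the aligned frame, or simply observe that the Hermitian part of the off-diagonal pair contains the term $\mp k_{pf}\Omega/\bigl(V_o^2(1+\Omega^2\tau^2)\bigr)$ in its imaginary part, which is nonzero for $\Omega\neq 0$ and $k_{pf}>0$, so $Y_s^{\mathcal{R}}(j\Omega)$ cannot vanish identically --- but you should state the frame convention explicitly so the "degenerate operating point" discussion is airtight.
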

\subsection{Virtual Synchronous Generator~(VSG) Control Strategy}  
The VSG strategy is an alternative strategy for active and reactive power control of  grid-connected converters. The controller is designed to emulate the characteristics of the classical model of a synchronous generator which is described by the following equations, 
\begin{equation} \label{Eq:sg_classical}
\begin{aligned}
\frac{d \delta}{dt} &= \omega_r-\omega_o,\\ M \frac{d\omega_r}{dt}  &= P_m - D_m \omega_r + (v_Q i_Q + v_D i_D)\\
 v_Q + j v_D &= E_g \angle \delta + j x_g (i_Q + j i_D) 
\end{aligned}
\end{equation}
where $M, D_m, \omega_r, \delta, x_g$ denote the inertia, mechanical damping, electrical speed~(in rad/s), rotor angle and transient reactance of the machine, and $E_g$ is constant. 
 \begin{theorem} \label{Th:sg_classical_non_passivity}
The admittance corresponding to the small-signal model of a  converter  which emulates~\eqref{Eq:sg_classical} is non-passive.
\end{theorem}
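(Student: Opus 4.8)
The plan is to derive the $2\times2$ small-signal admittance $Y_s(s)$ implied by~\eqref{Eq:sg_classical} and then exhibit a single frequency at which the Hermitian part $Y_s^{\mathcal R}(j\Omega)=Y_s(j\Omega)+Y_s^H(j\Omega)$ fails to be positive semi-definite. Since passivity requires all three conditions of Section~\ref{Sec:PR_sys1}, violating condition~2 at one admissible frequency is enough, and I would target the low-frequency end, consistent with the paper's thesis that these active/reactive power controls lose passivity at low frequencies.

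First I would linearise the algebraic interface equation $v_Q+jv_D=E_g\angle\delta+jx_g(i_Q+ji_D)$ about a quiescent point, which separates into $\Delta v_Q=-E_g\sin\delta_o\,\Delta\delta-x_g\Delta i_D$ and $\Delta v_D=E_g\cos\delta_o\,\Delta\delta+x_g\Delta i_Q$, and solve these for $\Delta i_D,\Delta i_Q$ in terms of $\Delta\delta$ and $\Delta v_D,\Delta v_Q$. Next I would close the swing-equation loop: linearising the first two equations of~\eqref{Eq:sg_classical} gives $\Delta\delta=\Delta P_e/\big(s(Ms+D_m)\big)$, and substituting the electrical-power perturbation $\Delta P_e=-K_\delta\Delta\delta+a\,\Delta v_D-b\,\Delta v_Q$ (with $a=E_g\cos\delta_o/x_g$, $b=E_g\sin\delta_o/x_g$ and $K_\delta$ the synchronizing coefficient) yields $\Delta\delta=T(s)(a\,\Delta v_D-b\,\Delta v_Q)$, where $T(s)=1/(Ms^2+D_ms+K_\delta)$. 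The crucial feature is that the free integrator in the angle dynamics is cancelled by $K_\delta$, so for a typical stable operating point ($M,D_m,K_\delta>0$) the map $T(s)$ is Hurwitz and $Y_s(s)$ has no poles in the closed right half plane; this already disposes of conditions~1 and~3, forcing the non-passivity to appear through condition~2.

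Assembling the admittance, the diagonal entries come out as $Y_{DD}(s)=-ab\,T(s)$ and $Y_{QQ}(s)=+ab\,T(s)$, i.e.\ $Y_{DD}(s)=-Y_{QQ}(s)$ identically. This \emph{structural identity} is the heart of the argument: the diagonal entries of $Y_s^{\mathcal R}(j\Omega)$ are $2\,\mathrm{Re}\,Y_{DD}(j\Omega)$ and $2\,\mathrm{Re}\,Y_{QQ}(j\Omega)=-2\,\mathrm{Re}\,Y_{DD}(j\Omega)$, which are negatives of one another. A Hermitian matrix can be positive semi-definite only if every diagonal entry is non-negative (test with a coordinate vector $e_i$), so positive semi-definiteness would force $\mathrm{Re}\,Y_{DD}(j\Omega)=0$. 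Evaluating at $\Omega=0$ gives $\mathrm{Re}\,Y_{DD}(0)=-ab/K_\delta$, which is strictly non-zero whenever $ab\neq0$, i.e.\ for any non-degenerate operating point $0<\delta_o<\pi/2$. Hence one diagonal entry of $Y_s^{\mathcal R}(0)$ is strictly negative, $Y_s^{\mathcal R}(0)$ is not positive semi-definite, and the admittance is non-passive.

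The only real obstacle is the bookkeeping in the implicit power-feedback loop, since $\Delta P_e$ depends on the currents, which depend on $\Delta\delta$, which depends back on $\Delta P_e$; the coefficient $K_\delta$ and the coefficients of $\Delta v_D,\Delta v_Q$ must be collected carefully using the quiescent identities $v_{Qo}=E_g\cos\delta_o-x_gi_{Do}$ and $v_{Do}=E_g\sin\delta_o+x_gi_{Qo}$. Two minor points deserve a remark: the degenerate operating points $\delta_o=0$ and $\delta_o=\pi/2$ (zero active-power transfer, or $a=0$) make the $\Omega=0$ test vanish, but these are non-generic and can be handled by moving to a small non-zero $\Omega$ where $\mathrm{Re}\,T(j\Omega)\neq0$; and one should confirm $K_\delta>0$ so that $\Omega=0$ is a regular point of $Y_s$ rather than a pole, which is exactly what the synchronous-generator emulation guarantees.
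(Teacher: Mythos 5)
Your derivation is correct, and it is worth noting at the outset that the paper does not actually prove Property~\ref{Th:sg_classical_non_passivity} in its own text --- it defers to the companion reference \cite{kaustav_jepes} and only illustrates the claim numerically (Figure~\ref{fig:lf_passivity_all_devices}). So yours is a genuinely self-contained argument rather than a variant of one in the paper. I verified the algebra: linearising the stator relation gives $\Delta i_D=(-E_g\sin\delta_o\,\Delta\delta-\Delta v_Q)/x_g$ and $\Delta i_Q=(\Delta v_D-E_g\cos\delta_o\,\Delta\delta)/x_g$, the quiescent identities collapse the power perturbation to $\Delta P=-K_\delta\Delta\delta+a\Delta v_D-b\Delta v_Q$ with $K_\delta=E_gV_o\cos\zeta_o/x_g>0$, and the resulting admittance indeed has $Y_{DD}(s)=-ab\,T(s)$, $Y_{QQ}(s)=+ab\,T(s)$. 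Your structural identity is in fact slightly stronger than you state: the trace of $Y_s^{\mathcal R}(j\Omega)$ is \emph{identically zero} at every frequency, so $Y_s^{\mathcal R}$ can be positive semi-definite only if it is the zero matrix --- and the off-diagonal entry $b^2T(j\Omega)-a^2\overline{T(j\Omega)}$ does not vanish. This zero-trace mechanism is exactly the one the paper itself deploys in Appendix~\ref{AppSec:proof_rlc_jacobian_non_passive} for the network feedforward matrix $D_j+D_j^T$, and the "negative diagonal entry at $\Omega\to 0$" test parallels the paper's Appendix~\ref{AppSec:passivity_sg_im_p_phi_q_v}.1 argument for $\mathcal N_s(s)$; your proof shows the same phenomenon already lives in the admittance formulation.

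Two loose ends, neither fatal. First, your repair of the degenerate operating points $\delta_o\in\{0,\pi/2\}$ by "moving to a small non-zero $\Omega$" does not work for the diagonal test, because there $ab=0$ identically and both diagonals of $Y_s^{\mathcal R}$ vanish at every frequency; the correct repair is the off-diagonal (zero-trace) observation above, under which a Hermitian matrix with zero diagonal and non-zero off-diagonal has eigenvalues of both signs. Second, you assume $D_m>0$ so that $T(s)$ is Hurwitz; for $D_m=0$ the poles $\pm j\sqrt{K_\delta/M}$ sit on the imaginary axis and the argument must instead invoke condition~(3) of Section~\ref{Sec:PR_sys1} --- the residue matrix $\bigl(-j/(2M\Omega_p)\bigr)\bigl[\begin{smallmatrix}-ab & b^2\\ -a^2 & ab\end{smallmatrix}\bigr]$ is not even Hermitian, so passivity fails there too. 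This mirrors the Case~I/Case~II split the paper performs for $\mathcal N_s(s)$ and would make your proof cover the full parameter range of~\eqref{Eq:sg_classical}.
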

\noindent The proofs of these properties are given in~\cite{kaustav_jepes}.  These  properties  hold true regardless of the direction of power flow. \newpage
\par \noindent {\em Illustrative Example}: The non-passivity  of the admittance in the low frequency region is illustrated in Figure~\ref{fig:lf_passivity_all_devices}.  Two devices are considered: a VSC-based STATCOM that regulates terminal voltage using a PI controller and a VSC emulating a synchronous machine. The  controller parameters of the STATCOM are  as given in~\cite{mkdas_network_statcom_interaction}. Note that passivity is inferred through the computation of the frequency-dependent eigenvalues of  $Y_s^\mathcal{R}(j\Omega) = Y_s(j\Omega) + Y_s^T(-j\Omega)$, where $Y_s(j\Omega)$ denotes the admittance of the devices. One of the eigenvalues  is negative in the lower frequency range~(below 10~Hz), indicating that the conditions for passivity are not satisfied. 
\begin{figure}[h]
    \centering
    \includegraphics[trim={0cm 6.5cm 0cm 0cm},clip,width=0.49\textwidth]{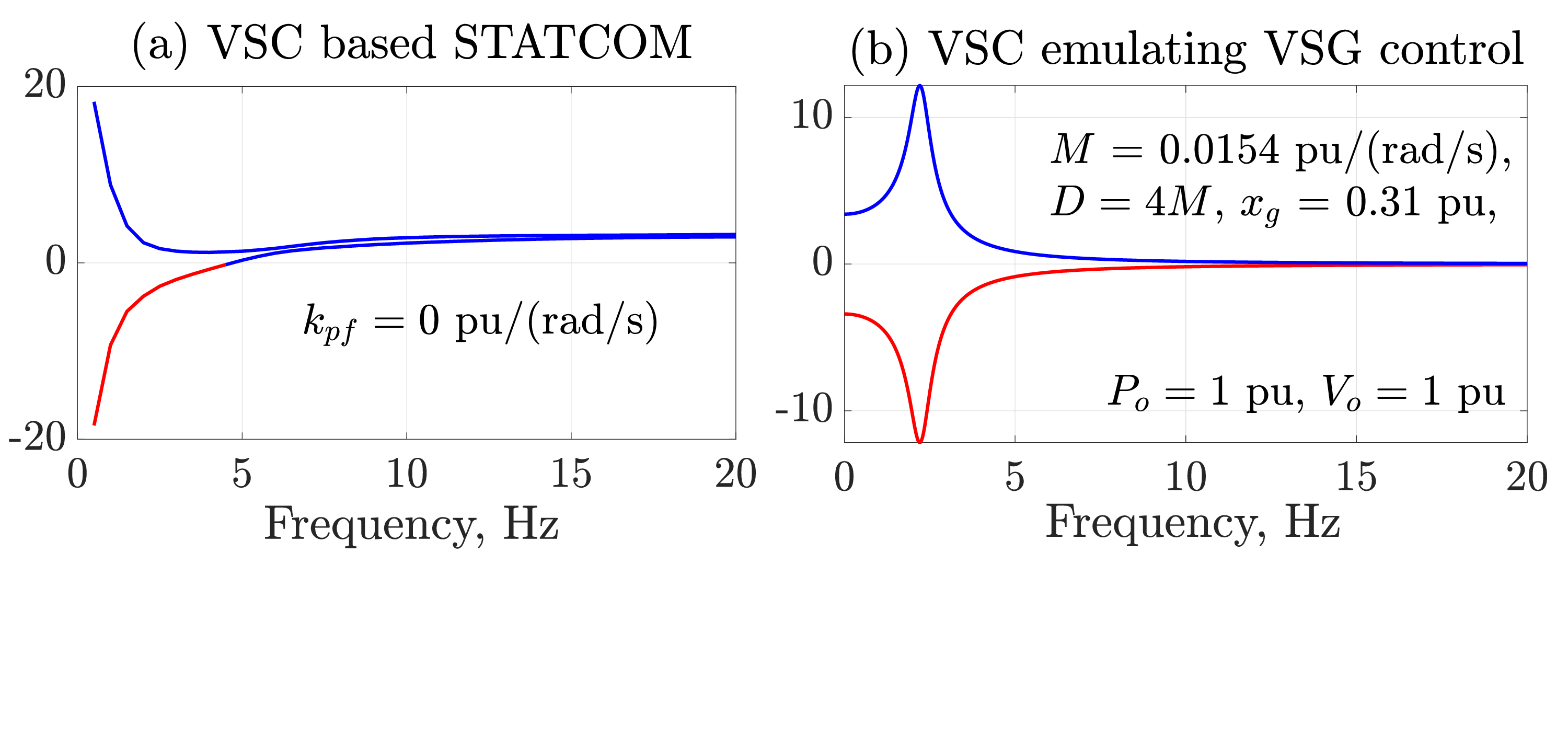}
    \caption{Eigenvalues of $Y_{s}^{\mathcal{R}}(j\Omega)$}
    \label{fig:lf_passivity_all_devices}
\end{figure}
\par \noindent {\em Remarks}: Passivity is a sufficient criterion for stability and not a necessary one. The conservative nature of passivity can be an acceptable trade-off against the benefits of having a convenient-to-use local criterion, provided passivity is achievable. The foregoing analysis shows that passivity of the admittance is impossible to achieve with typical  control strategies of converter-interfaced devices, although it is well known that an interconnected system with these devices can be operated stably. Since these control strategies are essential for voltage and frequency regulation, they cannot be abandoned or radically changed. Hence there is a need to modify and/or expand the local criteria, and explore alternative transfer function formulations. 
\section{Alternative Formulations based on Active and Reactive Power and Polar Coordinates of Voltage} \label{Sec:alternative_variables_analysis}
Since the low frequency behaviour of devices is often  associated with strategies to regulate voltage magnitude and frequency, an  alternative formulation with the active and reactive power drawn by the device ($\Delta P$, $\Delta Q$), and the polar components of the bus voltage ($\Delta \phi$, $\Delta V_n$) as the interface variables, as defined in~\eqref{Eq:cart_2_pol} is considered here. A related pair of interface variables is also considered, namely, the measured frequency and the derivative of $V_n$ as given below. 
\begin{align}
\Delta \tilde{\omega}(s) = \frac{s\, \Delta \phi(s)}{1+s \tau}, \;\;\;\; \Delta \tilde{V}_n^d(s) =\frac{s\;\Delta V_n(s)}{1+s \tau}
\label{aaa}
\end{align}
The derivatives that are used in the computation of~\eqref{aaa} are approximate; they preserve  the proper-ness of the transfer functions. \\
\noindent The nomenclature for the various transfer function  matrices employed in the following analyses  are given in Table~\ref{Tab:trans_func_alt_variable}. 
\begin{table}[h]
\begin{center}
\renewcommand{\arraystretch}{1.35}

\begin{tabular}{|c c|c|c|c|}
\hline
\begin{tabular}[c]{@{}c@{}}Input\\ variable\end{tabular} &
  \begin{tabular}[c]{@{}c@{}}Output\\ variable\end{tabular} &
  \begin{tabular}[c]{@{}c@{}}Transfer\\ function\end{tabular} &
  Sub-system &
  \begin{tabular}[c]{@{}c@{}}Inverse\\ system\end{tabular} \\ \hline
  \multicolumn{2}{|c|}{\textbf{Model I}} & \multicolumn{3}{|c|}{}   \\  \cline{3-5} 
  \multirow{2}{*}{\begin{tabular}[c]{@{}c@{}}$\begin{bmatrix}    \Delta i_D(s) \\ \Delta i_Q (s)      \end{bmatrix}$\end{tabular}} &
  \multirow{2}{*}{\begin{tabular}[c]{@{}c@{}}$\begin{bmatrix}     \Delta v_D(s) \\ \Delta v_Q(s)     \end{bmatrix}$\end{tabular}} &
  $Z_{n}(s)$ &
  Network &
  $Y_{n}(s)$ \\ \cline{3-5} 
                   &                    & $Z_{s}(s)$                  & Shunt device                  & $Y_{s}(s)$                  \\ \hline
\multicolumn{2}{|c|}{\textbf{Model II}} & \multicolumn{3}{|c|}{}  \\ \cline{3-5}
\multirow{2}{*}{\begin{tabular}[c]{@{}c@{}}$\begin{bmatrix}    \Delta \phi(s) \\ \Delta V_n (s)      \end{bmatrix}$\end{tabular}} &
  \multirow{2}{*}{\begin{tabular}[c]{@{}c@{}}$\begin{bmatrix}     \Delta P(s) \\ \Delta Q(s)     \end{bmatrix}$\end{tabular}} &
  $J_{n}(s)$ &
  Network &
  $\mathcal{N}_{n}(s)$ \\ \cline{3-5} 
                   &                    & $J_{s}(s)$                  & Shunt device                  & $\mathcal{N}_{s}(s)$                  \\  
 \hline
 \multicolumn{2}{|c|}{\textbf{Model III}} & \multicolumn{3}{|c|}{} \\ \cline{3-5} 
\multirow{2}{*}{\begin{tabular}[c]{@{}c@{}}$\begin{bmatrix}    \Delta \tilde{\omega}(s) \\ \Delta \tilde{V}_n^d (s)      \end{bmatrix}$\end{tabular}} &
  \multirow{2}{*}{\begin{tabular}[c]{@{}c@{}}$\begin{bmatrix}     \Delta P(s) \\ \Delta Q(s)     \end{bmatrix}$\end{tabular}} &
  $J_{nd}(s)$ &
  Network &
  $\mathcal{N}_{nd}(s)$ \\ \cline{3-5} 
                   &                    & $J_{sd}(s)$                  & Shunt device                  & $\mathcal{N}_{sd}(s)$                  \\ 
 \hline
\end{tabular}

\end{center}
\caption{Nomenclature for transfer function matrices}
\label{Tab:trans_func_alt_variable}
\end{table}
\renewcommand{\arraystretch}{1}

\subsection{Model II: ($\Delta P$,$\Delta Q$)-($\Delta\phi$,$\Delta V_n$) as input-output variables} \label{Sec:Model2}
It is essential to check whether the stability of the overall system with (a) D-Q currents  and voltages  and (b) ($\Delta P$, $ \Delta Q$) and ($\Delta \phi$, $ \Delta V_n$) as the input-output variables are equivalent. The following property  clarifies this point.
\begin{theorem}
Stability of the closed loop systems of Model I and Model II are equivalent, as the poles of the transfer functions are the same.
\end{theorem}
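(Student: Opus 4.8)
The plan is to exhibit a constant (frequency-independent), invertible linear map between the interface variables of the two models and to show that this map carries the closed-loop characteristic equation of Model~I onto that of Model~II up to multiplication by a nonzero constant. First I would linearize the algebraic relations in~\eqref{Eq:cart_2_pol} about an operating point $(v_{Do},v_{Qo},i_{Do},i_{Qo})$. Since $(\phi,V_n)$ depend only on $(v_D,v_Q)$, their increments satisfy $[\Delta\phi\;\;\Delta V_n]^T = T_v\,[\Delta v_D\;\;\Delta v_Q]^T$ for a constant Jacobian $T_v$ with $\det T_v = 1/V_o^2 \neq 0$. The power increments depend on both voltage and current, giving $[\Delta P\;\;\Delta Q]^T = M_v\,[\Delta v_D\;\;\Delta v_Q]^T + M_i\,[\Delta i_D\;\;\Delta i_Q]^T$, where $M_v$ is built from the operating-point currents and $M_i = \begin{bmatrix} v_{Do} & v_{Qo}\\ -v_{Qo} & v_{Do}\end{bmatrix}$ has $\det M_i = V_o^2 \neq 0$. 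The non-vanishing of these determinants, guaranteed whenever the operating-point voltage is nonzero, is what makes the change of variables a genuine, dynamics-free isomorphism, so no poles or zeros in $s$ are created or destroyed by it.

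Next I would express the Model~II transfer functions in terms of the Model~I admittances. Substituting $[\Delta i_D\;\;\Delta i_Q]^T = Y_s\,[\Delta v_D\;\;\Delta v_Q]^T$ and $[\Delta v_D\;\;\Delta v_Q]^T = T_v^{-1}[\Delta\phi\;\;\Delta V_n]^T$ yields $J_s = (M_v^{(s)} + M_i Y_s)\,T_v^{-1}$, and the same algebra gives $J_n = (M_v^{(n)} + M_i Y_n)\,T_v^{-1}$ for the network. The key observation is that the current entering the network's power expression is the negative of the device current (Kirchhoff's current law at the shared port), so $M_v^{(n)} = -M_v^{(s)}$, while $M_i$ is common to both blocks since it depends only on the shared voltage.

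I would then write down the interconnection constraint in each representation. In Model~I a common port voltage together with $\Delta I_s + \Delta I_n = 0$ gives the characteristic equation $\det(Y_s(s)+Y_n(s)) = 0$; in Model~II a common $(\Delta\phi,\Delta V_n)$ together with the power balance $\Delta P_s + \Delta P_n = 0$, $\Delta Q_s + \Delta Q_n = 0$ gives $\det(J_s(s)+J_n(s)) = 0$. Adding the two expressions for $J$, the voltage-dependent terms cancel because $M_v^{(n)} = -M_v^{(s)}$, leaving the identity $J_s + J_n = M_i\,(Y_s + Y_n)\,T_v^{-1}$. Taking determinants, $\det(J_s+J_n) = \det(M_i)\,\det(T_v^{-1})\,\det(Y_s+Y_n) = V_o^{4}\,\det(Y_s+Y_n)$, so the two characteristic polynomials differ only by the nonzero constant $V_o^4$ and hence share exactly the same roots. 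Because $T_v$, $M_v$ and $M_i$ carry no $s$-dependence, the poles of the individual blocks $J_s,J_n$ also coincide with those of $Y_s,Y_n$, and the relation is invertible via $Y = M_i^{-1}(J\,T_v - M_v)$, which rules out any spurious pole--zero cancellation.

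The step I expect to be the main obstacle is the bookkeeping that produces the cancellation $M_v^{(s)} + M_v^{(n)} = 0$: one must fix the sign and direction conventions for the device and network currents consistently so that the power-balance constraint of Model~II collapses to the same determinant condition as the current-balance constraint of Model~I. Getting this right is precisely what turns a superficially messier transformation (Model~II mixes voltage and current into $P$ and $Q$) into a relation whose determinant is a nonzero constant multiple of the Model~I characteristic polynomial, thereby establishing that the closed-loop poles of the two formulations are identical.
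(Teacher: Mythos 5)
Your proposal is correct and follows essentially the same route as the paper: the paper states the constant-matrix relations $J_n=(\mathcal{E}Y_n+\mathcal{C})\mathcal{F}$, $J_s=(\mathcal{E}Y_s-\mathcal{C})\mathcal{F}$ (your $M_i$, $\pm M_v$, $T_v^{-1}$ with the same sign cancellation from the port current convention) and concludes $G_2=\mathcal{F}^{-1}G_1\mathcal{E}^{-1}$, so the poles coincide because $\mathcal{E}$ and $\mathcal{F}$ are invertible whenever $V_o\neq 0$. Your determinant formulation $\det(J_s+J_n)=V_o^4\det(Y_s+Y_n)$ is just an equivalent way of stating the same conclusion, with the added benefit that you derive the interface relations the paper only quotes.
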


\begin{proof}
\begin{figure}[h]
\centering
\includegraphics[trim={0cm 10.2cm 0cm 0cm} ,clip ,width=0.41\textwidth]{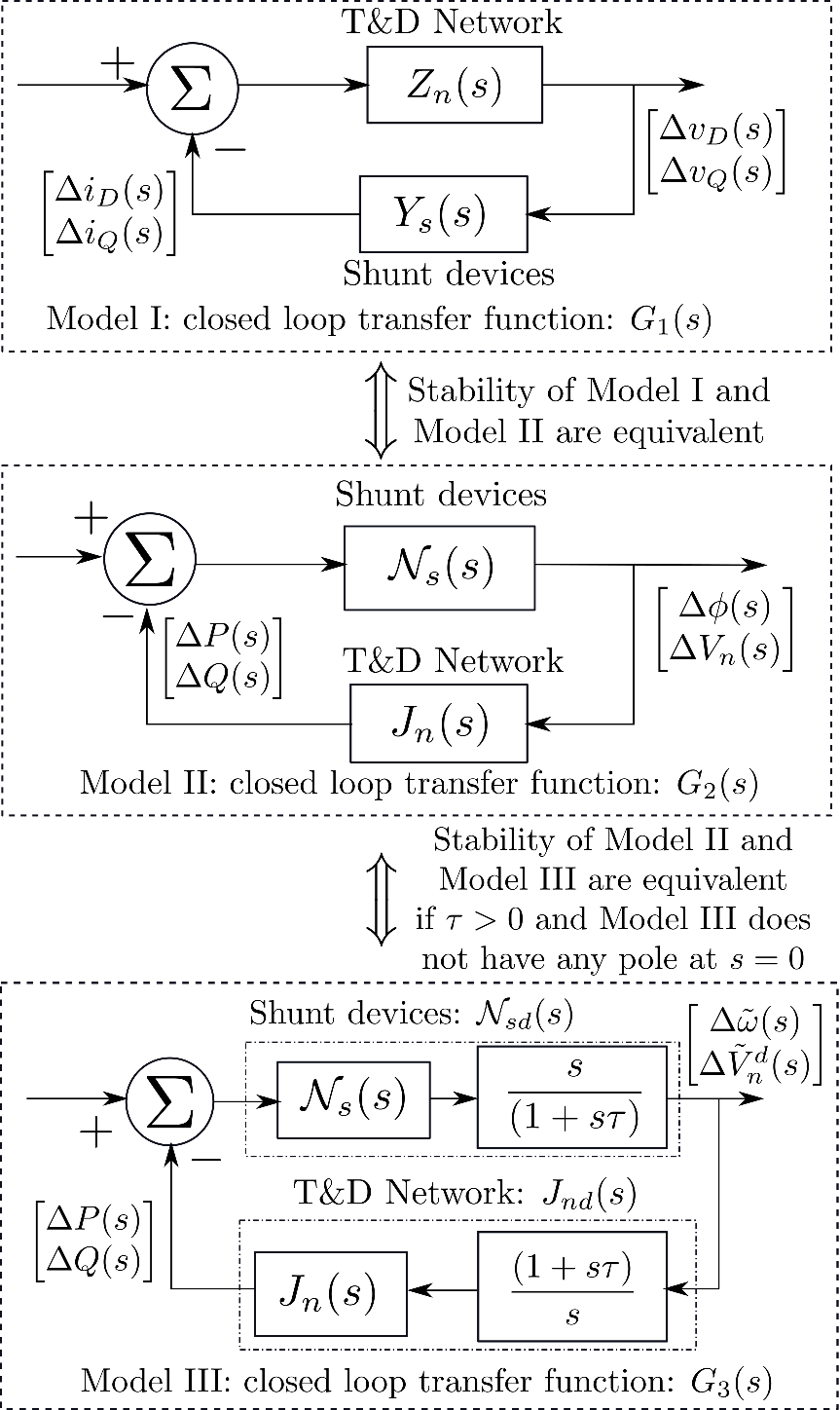}
\caption{Stability of Model I and Model II}
\label{Fig:dq_jacobian_mod}
\end{figure}
\par The expression of $ G_1(s) $ is given as follows.
\begin{align*}
G_1(s)
 = \left( Y_{n}(s) + Y_{s}(s) \right)^{-1} \quad \text{where } Y_{n}(s)  =  (Z_{n}(s))^{-1}
\end{align*}
The transfer functions of the individual components of Model I and Model II are related as follows.
\begin{align} \label{Eq:j_y_z_relation}
 \begin{aligned}
J_{n}(s)  = \left(  \mathcal{E}  Y_{n}(s)  +  \mathcal{C} \right)  \mathcal{F}, \, J_{s}(s) = \left( \mathcal{E}  Y_{s}(s) - \mathcal{C} \right)  \mathcal{F} 
 \end{aligned} 
 \end{align}
 where
\begin{align}
\begin{aligned} \label{Eq:vc_ic_vt}
 \mathcal{E}  = \begin{bmatrix}   v_{Do}  &  v_{Qo} \\ - v_{Qo}  &  v_{Do}  \end{bmatrix}&, \, \mathcal{C}  = \begin{bmatrix}  i_{Do}  &  i_{Qo} \\  i_{Qo}  & -i_{Do}  \end{bmatrix},  \mathcal{F}  = \begin{bmatrix}   v_{Qo}  &  v_{Do}  \\ -v_{Do} &  v_{Qo} \end{bmatrix}
\end{aligned}
\end{align}
The closed loop transfer function of Model II is as follows.
\begin{align}
G_2(s)  &= \left(  J_{s}(s)  +  J_{n}(s) \right)^{-1} =  \mathcal{F} ^{-1} \, G_1(s) \, \mathcal{E} ^{-1} \label{Eq:g1_g2_relation}
\end{align}
The expressions here are derived using~\eqref{Eq:j_y_z_relation}. Since $\mathcal{F}$ and $\mathcal{E}$ are both invertible~(except for $v_{Do} = v_{Qo} = 0$), the poles of both $G_1(s)$ and $G_2(s)$ are identical.  This implies that the stability of the two models are equivalent.
\end{proof}

\par The choice of different input and output pairs may result in passivity properties being different. For instance, the droop control strategy is passive when these interface variables are used because the transfer function matrix in the relationship
\begin{align*}
    \mathcal{N}_{s}(s)
     =
    \begin{bmatrix}
     \frac{1+s\tau}{s}\times \frac{1}{k_{pf}} & 0\\ 0 & \frac{1}{k_{qv}}\\
     \end{bmatrix}
\end{align*}
satisfies the frequency domain passivity conditions. The T\&D network may be represented as follows:
\begin{align*}
    J_{n}(s)
    =  \begin{bmatrix}
    J_{11}(s) & J_{12}(s) \\
    J_{21}(s) & J_{22}(s) 
    \end{bmatrix} 
\end{align*}
\par At low frequencies, the network transfer function can be approximated by $J_{LF}=J_{n}(s=0)$, which is the unreduced form of the load-flow Jacobian matrix. Normally $J_{LF}^R=J_{LF}+J_{LF}^T$ is singular~(with one zero eigenvalue), because there cannot be a unique solution for the phase angles for specified set of active and reactive power injections. $J_{LF}^R$ may not be  positive semi-definite either because the shunt capacitances in the network often result in a negative eigenvalue. This also implies that the network transfer function $J_n(s)$ may not be passive. 
\par The non-passivity of $J_{LF}$ can be alleviated by {\em requiring} that some devices connected to the network ``contribute'' to the diagonal terms of $J_{22}$. This is in order to compensate the effect of the shunt capacitances. The compensation must be done by the devices without jeopardizing their own passivity. This idea is  depicted in Figure~\ref{fig:net_jacobian_passivity}.  In practical terms, this means that at least some  devices should contribute to voltage regulation through Q-V control, which is not a surprising  requirement. The compensation, denoted by $k_{qv}^{c}$, has to be specified for each device by the T\&D operator based on an evaluation of the eigenvalues of $J_{LF}^R$ and the ratings of the connected devices. The aim is to ensure that no eigenvalue of $J_{LF}^R$ is negative. 
\begin{figure}[h]
    \centering
    \includegraphics[width=0.44\textwidth]{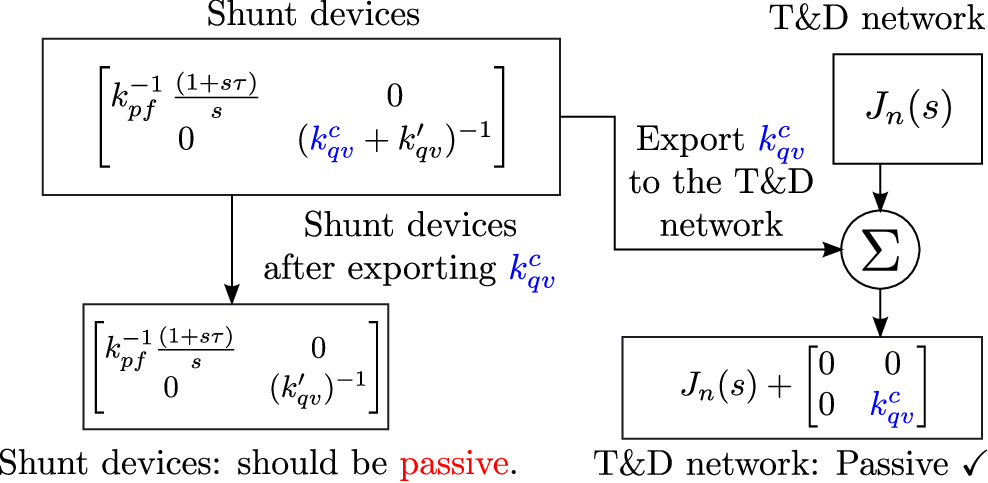}
    \caption{Contribution of devices to network passivity}
    \label{fig:net_jacobian_passivity}
\end{figure}

\noindent \textit{Illustrative Example:} The schematic of a network with controllable generators and loads is shown in Figure~\ref{fig:three_mac_system_schematic}. The transmission line parameters and the equilibrium power flows are 
 taken from the three-machine system  of~\cite{anderson2003power}. 
\begin{figure}[h]
    \centering
    \includegraphics[width=0.41\textwidth]{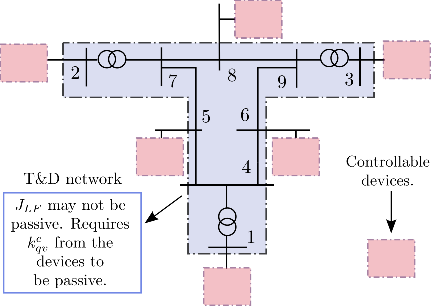}
    \caption{A network with controllable devices}
    \label{fig:three_mac_system_schematic}
\end{figure}
\par The eigenvalues of $J_{LF}^R$ are given in Table~\ref{Tab:three_mac_sys_network_jacobian_eig}. $J_{LF}$ is not passive as $J_{LF}^R$ has a negative eigenvalue. If all the controllable generators and loads are able to contribute $k_{qv}^{c} = 0.7$~pu, then $J_{LF}$ becomes passive, as shown in the table. Although the passivation of $J_{LF}$ is demonstrated here with equal contribution of $k_{qv}^{c}$ by all devices, it need not always be so in practice.
\begin{table}[h]
\begin{tabular}{|c|c|}
\hline
Base Case &
  Modified Case \\ \hline
\begin{tabular}[c]{@{}c@{}}$\mathbf{-0.84}, 0, 7.52, 8.50, 10.15,$ \\ $ 12.93, 31.71, 34.74, 34.95,$ \\ $36.29, 41.37, 42.8, 93.51, 94.52,$\\ $107.94, 108.29, 115.46, 115.76$\end{tabular} &
  \begin{tabular}[c]{@{}c@{}}$0, \mathbf{0.025}, 7.87, 8.82, 10.42,$\\ $13.13, 32.43, 35.06, 35.44, 36.53,$\\ $42.1, 42.88, 93.92, 95.08,$\\ $ 108.29, 109.06, 115.72, 116.61$\end{tabular} \\ \hline
\multicolumn{2}{|l|}{$k_{qv}^{c}$ contributions: 0.65 pu at buses 1, 2, 3, 5, 6, 8 each.} \\ \hline
\end{tabular}
\caption{Eigenvalues of $J_{LF}^R$ of the three machine system}
\label{Tab:three_mac_sys_network_jacobian_eig}
\end{table}
\par With this modification, passivity seems to be an achievable criterion for the network and the devices obeying the droop control strategy. However, Model II  encounters problems with devices which emulate synchronous machines.
\begin{theorem} \label{Th:sg_classical_non_passive_jacobian}
The transfer function matrix $\mathcal{N}_s(s)$ of a converter which emulates~\eqref{Eq:sg_classical} is not passive. [See Appendix~\ref{AppSec:passivity_sg_im_p_phi_q_v}.1 for the proof]
\end{theorem}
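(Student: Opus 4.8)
The plan is to compute $\mathcal{N}_s(s)$, the map from $(\Delta P,\Delta Q)$ to $(\Delta\phi,\Delta V_n)$ of Model~II, explicitly from the linearised model~\eqref{Eq:sg_classical} and then test it against the frequency-domain conditions of Section~\ref{Sec:PR_sys1}. The crucial observation is that the only genuine dynamics in~\eqref{Eq:sg_classical} is the swing equation; the stator relation $v_Q+jv_D=E_g\angle\delta+jx_g(i_Q+ji_D)$ is algebraic and $E_g$ is constant. First I would linearise the swing equation: with $\Delta P_m=0$ and $\Delta\omega_r=s\,\Delta\delta$, it collapses to the scalar inertial transfer function $\Delta\delta(s)=g(s)\,\Delta P(s)$ with $g(s)=1/\!\big(s(Ms+D_m)\big)$, using the sign convention of~\eqref{Eq:cart_2_pol} in which $P$ is the power \emph{absorbed}. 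Next I would linearise the algebraic power-flow relations that follow from the stator equation and~\eqref{Eq:cart_2_pol}, expressing $\Delta P$ and $\Delta Q$ as real, frequency-independent combinations of $\Delta V$ and $\Delta\theta$, where $\theta=\phi-\delta$; their coefficients constitute the constant internal Jacobian of the machine behind its reactance $x_g$.

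The key algebraic step is to eliminate the internal quantities $\delta$, $\theta$ and $V$ through $\Delta\theta=\Delta\phi-\Delta\delta$ and $\Delta\delta=g(s)\Delta P$, and solve for $(\Delta\phi,\Delta V_n)$ in terms of $(\Delta P,\Delta Q)$. A short calculation shows that $g(s)$ cancels from every entry of $\mathcal{N}_s(s)$ except the $\Delta\phi$--$\Delta P$ entry, where it survives \emph{additively}: $\mathcal{N}_{s,11}(s)=\kappa+g(s)$ with $\kappa$ a real constant fixed by the operating point, while the remaining three entries are real constants. This isolated, additive appearance of the inertial transfer function is the structural fact that forces the result.

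I would then form the Hermitian part $\mathcal{N}_s^{\mathcal{R}}(j\Omega)=\mathcal{N}_s(j\Omega)+\mathcal{N}_s^H(j\Omega)$ and invoke the necessary condition (implicit in condition~2 of Section~\ref{Sec:PR_sys1}) that each diagonal entry be non-negative. Its $(1,1)$ entry is $2\big(\kappa+\mathrm{Re}\,g(j\Omega)\big)$, and $\mathrm{Re}\,g(j\Omega)=-M/\!\big(M^2\Omega^2+D_m^2\big)$ is strictly negative for all $\Omega$, most negative ($-M/D_m^2$) as $\Omega\to0$. Hence at low frequencies the $(1,1)$ entry of $\mathcal{N}_s^{\mathcal{R}}$ goes negative, so $\mathcal{N}_s^{\mathcal{R}}(j\Omega)$ fails to be positive semi-definite and $\mathcal{N}_s$ is non-passive. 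This matches the physical intuition that an inertial storage element is passive only when power is paired with frequency (a velocity), not with the angle $\phi$ (a position) --- which is precisely why Model~III, using $\Delta\tilde{\omega}$, is expected to restore passivity.

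The main obstacle is that $\kappa$ depends on the operating point and need not be negative, so I must argue that the inertial contribution dominates. I would handle the typical regime by noting that for the lightly damped inertial parameters a VSG emulates (large $M/D_m^2$), and in the limiting case $D_m\to0$ where $g(s)=1/(Ms^2)$ has a \emph{double} pole at the origin that already violates the simple-pole requirement of condition~3, the negative inertial term overwhelms $\kappa$ near $\Omega=0$. For the complementary operating-point regime --- where the internal-Jacobian determinant changes sign and $\kappa$ might dominate at low frequency --- I would instead examine $\det\mathcal{N}_s^{\mathcal{R}}(j\Omega)$, which tends to a negative constant as $\Omega\to\infty$ and thereby certifies non-passivity independently of the damping. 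Finally, care must be taken with the pole of $g(s)$ at $s=0$ for $D_m>0$: its residue matrix is $\mathrm{diag}(1/D_m,\,0)$, which is positive semi-definite, confirming condition~3 is met and that the genuine violation lies in condition~2.
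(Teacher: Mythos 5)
Your proposal follows essentially the same route as the paper's Appendix~A.1 proof: the $(1,1)$ entry of $\mathcal{N}_s(s)$ is the inertial transfer function $1/(Ms^2+D_m s)$ plus an operating-point constant, and its Hermitian part is driven negative at low frequency by $\mathrm{Re}\left[1/\left(j\Omega(Mj\Omega+D_m)\right)\right]=-M/(M^2\Omega^2+D_m^2)$ (violating condition~2), while $D_m=0$ yields a repeated pole at $s=0$ violating condition~3. Your explicit concern about the sign and size of the constant $\kappa$ is legitimate --- the paper simply asserts dominance of the $-2M/D_m^2$ term ``for typical values'' --- so your treatment is, if anything, slightly more careful on that point without changing the argument.
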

\par The network transfer functions considered in the analysis in this sub-section  were based on a low frequency approximation, wherein the network was  represented by a static model $J_{LF}$.  However, the wide-band model of the network, $J_{n}(s)$, cannot be passivated because of the following property.
\begin{theorem} \label{Th:rlc_jacobian_non_passive}
A R-L-C network will not satisfy the frequency domain passivity conditions at  higher frequencies, with ($\Delta P$, $ \Delta Q$) and ($\Delta \phi$, $ \Delta V_n$) as input-output variables. [See Appendix~\ref{AppSec:proof_rlc_jacobian_non_passive}.1 for the proof]
\end{theorem}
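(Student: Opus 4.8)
The plan is to work directly from the algebraic relation $J_{n}(s) = \left(\mathcal{E}\,Y_{n}(s) + \mathcal{C}\right)\mathcal{F}$ in~\eqref{Eq:j_y_z_relation} and to examine the Hermitian part $J_{n}^{\mathcal{R}}(j\Omega) = J_{n}(j\Omega) + J_{n}^{H}(j\Omega)$ required by condition~2 of Section~\ref{Sec:PR_sys1}, exhibiting an eigenvalue that becomes negative as $\Omega$ grows. Since $\mathcal{E}$, $\mathcal{C}$, $\mathcal{F}$ in~\eqref{Eq:vc_ic_vt} are real and constant, all frequency dependence resides in $Y_{n}(j\Omega)$, and because $(\mathcal{E}Y_{n}\mathcal{F})^{H} = \mathcal{F}^{T}Y_{n}^{H}\mathcal{E}^{T}$, the admittance contribution collapses to twice the Hermitian part of $\mathcal{E}Y_{n}\mathcal{F}$, while $\mathcal{C}\mathcal{F}$ adds only a bounded, frequency-independent matrix.

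First I would record two constant-matrix identities. A direct multiplication gives $\mathcal{E}\mathcal{F} = V_{o}^{2}\,\mathcal{J}$, where $V_{o}^{2} = v_{Do}^{2}+v_{Qo}^{2}$ and $\mathcal{J} = \left[\begin{smallmatrix} 0 & 1 \\ -1 & 0 \end{smallmatrix}\right]$ is real and skew-symmetric, and $\mathcal{C}\mathcal{F} = \left[\begin{smallmatrix} -Q_{o} & P_{o} \\ P_{o} & Q_{o} \end{smallmatrix}\right]$, which is symmetric and bounded ($P_{o}, Q_{o}$ being the quiescent powers). The skew-symmetry of $\mathcal{E}\mathcal{F}$ is the crux: multiplied by the imaginary unit it yields a \emph{Hermitian} matrix rather than a skew-Hermitian one.

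Next I would establish the high-frequency form of the network admittance. For an R-L-C network the driving-point impedance is strictly proper (inductors block and capacitors short as $\Omega\to\infty$), so the admittance is dominated by the shunt capacitance seen at the port; in D-Q coordinates a shunt capacitor contributes $C(sI + \omega_{o}\mathcal{J})$, whence $Y_{n}(j\Omega) = j\Omega\,C_{\mathrm{eff}}\,I + O(1)$ with $C_{\mathrm{eff}}>0$ the effective port capacitance. Substituting gives $\mathcal{E}Y_{n}(j\Omega)\mathcal{F} = j\Omega\,C_{\mathrm{eff}}V_{o}^{2}\,\mathcal{J} + O(1)$. Since $\mathcal{J}$ is real and skew-symmetric, the leading term is already Hermitian and equals its own Hermitian part, so
\[
J_{n}^{\mathcal{R}}(j\Omega) = 2\,j\Omega\,C_{\mathrm{eff}}V_{o}^{2}\,\mathcal{J} + 2\,\mathcal{C}\mathcal{F} + O(1),
\]
a Hermitian matrix whose eigenvalues are $\pm 2\,\Omega\,C_{\mathrm{eff}}V_{o}^{2} + O(1)$ (using that $\mathcal{J}$ has eigenvalues $\pm j$). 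The eigenvalue near $-2\,\Omega\,C_{\mathrm{eff}}V_{o}^{2}$ diverges and eventually dominates the bounded $2\mathcal{C}\mathcal{F}$ and every $O(1)$ correction, so $J_{n}^{\mathcal{R}}(j\Omega)$ fails to be positive semi-definite for all sufficiently large $\Omega$, which proves non-passivity.

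The main obstacle is making the high-frequency limit rigorous rather than heuristic: one must justify that for a general R-L-C network the port admittance is genuinely capacitive to leading order with $C_{\mathrm{eff}}>0$, and that the rotation term $\omega_{o}\mathcal{J}$, the conductive contributions, and the inductive corrections are all $O(1)$ and hence uniformly dominated by the diverging negative eigenvalue. The conceptual point worth emphasising is \emph{why} Model~II loses the passivity the capacitor enjoys in Model~I: in D-Q variables the capacitive admittance $j\Omega C I$ is skew-Hermitian (lossless, zero Hermitian part), but because $\mathcal{F} = \mathcal{J}\,\mathcal{E}^{T}$ is a rotated copy of $\mathcal{E}$ rather than $\mathcal{E}^{H}$, the map $\mathcal{E}(\cdot)\mathcal{F}$ converts $j\Omega C I$ into the Hermitian, sign-indefinite matrix $j\Omega C V_{o}^{2}\mathcal{J}$; it is precisely this structural conversion, and not mere improperness, that forces the negative eigenvalue.
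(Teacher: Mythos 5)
Your constant-matrix identities are correct and are exactly the ones the paper's proof turns on: $\mathcal{E}\mathcal{F}=V_o^2\mathcal{J}$ with $\mathcal{J}$ skew-symmetric, and $\mathcal{C}\mathcal{F}=\left[\begin{smallmatrix}-Q_o & P_o\\ P_o & Q_o\end{smallmatrix}\right]$. The gap is in the step you yourself flag as the ``main obstacle'': you make the whole argument hinge on $Y_n(j\Omega)=j\Omega\,C_{\mathrm{eff}}I+O(1)$ with $C_{\mathrm{eff}}>0$, i.e.\ on the port admittance being improper and capacitive at high frequency. That premise does not hold for a general R-L-C network: if the port is seen through a series R-L branch (the generic situation for a device connected via a transformer or line), $Y_n(j\Omega)$ tends to a finite symmetric conductance matrix (possibly zero), your diverging term vanishes, and the proof reduces to comparing the very quantities you declared to be dominated. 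The paper's proof needs no such assumption. It takes a proper state-space realization $(A_n,B_n,C_n,D_n)$ of $Y_n$, notes that $D_n=\lim_{s\to\infty}Y_n(s)$ is block-diagonal with symmetric $D_1$, and computes the feedthrough of $J_n$, namely $D_j=(\mathcal{E}D_n+\mathcal{C})\mathcal{F}$. Since $\mathcal{E}D_n\mathcal{F}$ is skew-symmetric (a consequence of your own identity $\mathcal{E}\mathcal{F}=V_o^2\mathcal{J}$), one gets $D_j+D_j^T=2\,\mathcal{C}\mathcal{F}$, which is symmetric, traceless and nonzero whenever $(P_o,Q_o)\neq(0,0)$, hence indefinite with eigenvalues $\pm 2\sqrt{P_o^2+Q_o^2}$. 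This violates $D+D^T=W^TW\succeq 0$ in the time-domain conditions of Section~\ref{Sec:PR_ss} (equivalently, $J_n^{\mathcal{R}}(j\Omega)\to 2\,\mathcal{C}\mathcal{F}$ as $\Omega\to\infty$ fails condition~(2) of Section~\ref{Sec:PR_sys1}).

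In short, the matrix you set aside as a bounded, frequency-independent correction is in fact the entire proof, while the quantity you rely on need not exist. Your closing observation --- that $\mathcal{E}(\cdot)\mathcal{F}$ converts the skew-Hermitian $j\Omega C I$ into the Hermitian indefinite $j\Omega C V_o^2\mathcal{J}$ --- is correct and is a genuine addition: when a shunt capacitor does sit directly at the port, the negative eigenvalue is not merely bounded away from zero but unbounded. But as the sole mechanism it covers only that special case. To repair the argument, keep your identities and let $2\,\mathcal{C}\mathcal{F}$ be the leading indefinite term whenever the capacitive contribution is absent; the only degenerate case is a port with zero quiescent power flow, a caveat the paper itself leaves implicit.
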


\subsection{Model III: ($\Delta P$,$\Delta Q$)-($\Delta \tilde{\omega}$,$\Delta \tilde{V}_n^d$) as input-output variables} \label{Sec:Model3}
The following property brings out the equivalence of  closed loop systems of Model I, Model II and Model III from a stability perspective.
\begin{theorem}
The stability of closed loop systems represented by Models I, II and III are equivalent, provided that the closed loop system of Model I does not have repeated poles at $s=0$.
\end{theorem}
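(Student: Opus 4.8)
The plan is to reduce Model III to Model II, whose stability equivalence with Model I is already established by the preceding property, and then to track how the change of interface variables perturbs the closed-loop poles. First I would make the map between the two sets of interface variables explicit: by the definitions in~\eqref{aaa}, the Model III inputs are obtained from the Model II inputs by applying the \emph{scalar} transfer function $h(s)=\frac{s}{1+s\tau}$ to each component, so that $[\,\Delta\tilde{\omega}\;\;\Delta\tilde{V}_n^d\,]^{T}=h(s)\,[\,\Delta\phi\;\;\Delta V_n\,]^{T}$, while the outputs $[\,\Delta P\;\;\Delta Q\,]^{T}$ are shared. Equating the Model~II and Model~III output expressions gives $J_s(s)=h(s)\,J_{sd}(s)$ for the device and, identically, $J_n(s)=h(s)\,J_{nd}(s)$ for the network, i.e. $J_{sd}(s)=\frac{1+s\tau}{s}J_s(s)$ and $J_{nd}(s)=\frac{1+s\tau}{s}J_n(s)$. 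Summing and inverting then yields the Model~III closed loop $G_3(s)=\left(J_{sd}(s)+J_{nd}(s)\right)^{-1}=h(s)\,G_2(s)$, where $G_2(s)=\left(J_s(s)+J_n(s)\right)^{-1}$ is the Model~II closed loop.

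Next I would analyse the poles of $G_3(s)=h(s)\,G_2(s)$. Since $h(s)$ is scalar, each entry of $G_3$ is $h(s)$ times the corresponding entry of $G_2$, so relative to $G_2$ the only changes are those induced by the single finite zero of $h(s)$ at $s=0$ and its single finite pole at $s=-\frac{1}{\tau}$. The latter lies in the open left-half plane (as $\tau>0$) and is therefore irrelevant to stability, while the former can lower the multiplicity of a pole of $G_2$ at the origin by at most one. Hence the poles of $G_2$ and $G_3$ coincide everywhere in the closed right-half plane and on the imaginary axis, with the sole possible exception of a pole at $s=0$.

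The delicate point --- and the reason for the hypothesis --- is the treatment of this origin pole, and this is where I expect the main obstacle to lie. Because $G_1$ and $G_2$ share the same poles, assuming that Model~I has no repeated pole at $s=0$ forces $G_2$ to have at most a \emph{simple} pole there; such a pole is the benign, marginally stable free phase-reference mode (the same zero eigenvalue already seen in $J_{LF}^{R}$), and it is exactly this simple origin pole that the differentiating zero of $h(s)$ cancels in passing to $G_3$. I would then argue that this cancellation removes no genuine instability: all right-half-plane poles and all repeated imaginary-axis poles survive unchanged in $G_3$, so Model~III is stable if and only if Model~II is, and hence if and only if Model~I is. The hypothesis cannot be dropped, since if $G_2$ carried a \emph{double} pole at the origin --- a true instability --- the single zero of $h(s)$ would reduce it to a simple pole and make Model~III spuriously appear marginally stable while Models~I and~II are unstable; excluding repeated origin poles is precisely what rules out this mismatch and secures the claimed equivalence.
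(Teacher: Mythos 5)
Your proposal is correct and follows essentially the same route as the paper: it establishes $G_3(s)=\frac{s}{1+s\tau}G_2(s)$, notes that the extra pole at $s=-\frac{1}{\tau}$ and any cancellation there are harmless, and isolates the cancellation of a simple origin pole as the only delicate case, which is exactly what the hypothesis on Model I excludes. The only difference is cosmetic — you derive the relation $G_3=h(s)G_2$ explicitly from the interface-variable map and spell out why a repeated origin pole would make Model III spuriously appear stable, both of which the paper states more tersely.
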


\begin{proof}
Model III in Figure~\ref{Fig:dq_jacobian_mod_2} represents the overall system model when the interface variables are chosen to be $(\Delta P, \Delta Q)$ and $(\Delta \tilde{\omega}, \Delta \tilde{V}_n^d)$. The closed loop transfer function of Model III is
\begin{align}    
 G_3(s)  = \frac{s}{(1+s\tau)}  G_2(s). \label{Eq:g2_g3_relation}
\end{align}
\begin{figure}[h]
\centering
\includegraphics[trim={0cm 0cm 0cm 8cm} ,clip ,width=0.42\textwidth]{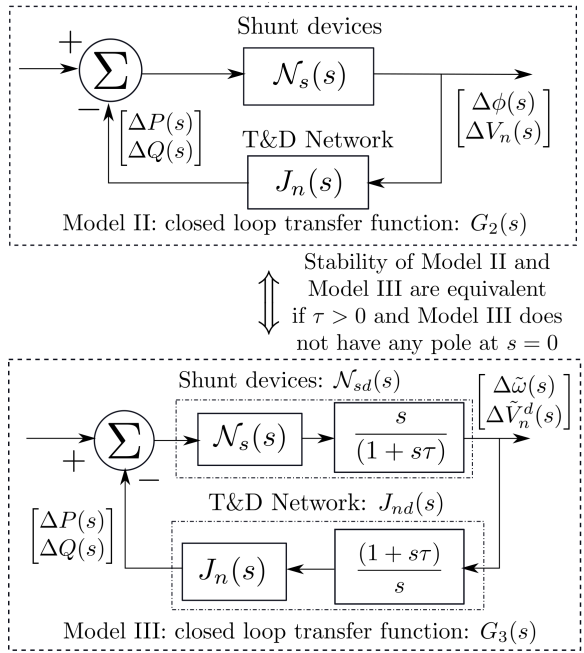}
\caption{Stability of Model II and Model III}
\label{Fig:dq_jacobian_mod_2}
\end{figure}
\par a) If there are no pole-zero cancellations on the RHS, then poles of $G_3(s)$ are the same as of $G_2(s)$, except for the additional stable pole at $s = -\frac{1}{\tau}$. In this case, both models are equivalent from a stability perspective.
\par b) A pole zero cancellation at $s = -\frac{1}{\tau}$ is not of concern since it is in the left half $s$-plane. 
\par c) If $G_2(s)$ has a pole at $s=0$, which is cancelled by the zero at $s=0$ in \eqref{Eq:g2_g3_relation}, then the stability is not affected provided that the cancelled  pole is {\em simple} (non-repeated). Note that if there is a non-simple pole of $G_2(s)$ at $s=0$, then $G_2(s)$ is unstable while $G_3(s)$ can be stable.
\end{proof}
\noindent \textit{Observation}: The closed loop system with ($\Delta P$, $\Delta Q$) and ($\Delta \phi, \Delta V_n$) as the input-output pairs has non-simple~(repeated) poles at $s=0$ only if all the converters emulating~\eqref{Eq:sg_classical} have zero damping~($D_m$), and all converters emulating~\eqref{Eq:pf_droop} have zero $k_{pf}$. As positive $D_m$ or $k_{pf}$ is introduced, one of the repeated poles at $s=0$  moves towards the left, and the pole at $s=0$ becomes simple. This can be understood in   practical terms as follows: If $D_m$ and  $k_{pf}$ are zero for all connected devices then the frequency will change monotonically if there is a load-generation imbalance. This situation can be avoided by having at least some devices with frequency regulation characteristics.

\par The advantage of using  ($\Delta \tilde{\omega}$, $\Delta \tilde{V}_n^d$) is that passivity is a feasible objective for both droop-controlled devices and VSG-controlled devices, as is evident from the following property.
\begin{theorem}
The transfer function matrix $\mathcal{N}_{sd}(s)$ of a converter which emulates~\eqref{Eq:pf_droop} or~\eqref{Eq:sg_classical} is passive, provided $\tau$ is small. [See Appendix~\ref{AppSec:passivity_sg_im_p_phi_q_v}.2 and~\ref{AppSec:passivity_sg_im_p_phi_q_v}.3 for proof]
\end{theorem}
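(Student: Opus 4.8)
\emph{Proof proposal.} My plan is to reduce both cases to the already-analysed Model~II block $\mathcal{N}_s(s)$ through the scalar multiplier $\frac{s}{1+s\tau}$, and then to verify the three frequency-domain conditions of Section~\ref{Sec:PR_sys1}. The starting observation is that, exactly as the closed-loop maps satisfy \eqref{Eq:g2_g3_relation}, the device blocks satisfy $\mathcal{N}_{sd}(s)=\frac{s}{1+s\tau}\,\mathcal{N}_s(s)$: from \eqref{aaa} the Model~III outputs $(\Delta\tilde{\omega},\Delta\tilde{V}_n^d)$ are the Model~II outputs $(\Delta\phi,\Delta V_n)$ passed through the common filter $\frac{s}{1+s\tau}$, while the input pair $(\Delta P,\Delta Q)$ is unchanged. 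Thus it suffices to study the positive-realness of $\frac{s}{1+s\tau}\,\mathcal{N}_s(s)$ for each control law, and the crucial structural fact is that $\frac{s}{1+s\tau}$ is itself positive real (its only pole is at $s=-1/\tau$ and its real part $\frac{\tau\Omega^2}{1+\tau^2\Omega^2}\ge 0$), so it contributes no right-half-plane poles and acts as a $90^\circ$ phase lead that rotates reactive response into resistive response.

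For the droop law \eqref{Eq:pf_droop} the verification is immediate (Appendix~\ref{AppSec:passivity_sg_im_p_phi_q_v}.2). Substituting the diagonal $\mathcal{N}_s(s)$ recorded in Section~\ref{Sec:Model2}, the multiplier cancels the $\frac{1+s\tau}{s}$ factor of the $P$--$\phi$ channel and yields $\mathcal{N}_{sd}(s)=\operatorname{diag}\!\big(\tfrac{1}{k_{pf}},\,\tfrac{s}{(1+s\tau)k_{qv}}\big)$. Its only pole is at $s=-1/\tau$; the Hermitian part $\mathcal{N}_{sd}^{\mathcal R}(j\Omega)=\operatorname{diag}\!\big(\tfrac{2}{k_{pf}},\,\tfrac{2\tau\Omega^2}{(1+\tau^2\Omega^2)k_{qv}}\big)$ is diagonal with non-negative entries; and there are no imaginary-axis poles. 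All three conditions hold for any $\tau>0$.

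The substantive case is the VSG law \eqref{Eq:sg_classical} (Appendix~\ref{AppSec:passivity_sg_im_p_phi_q_v}.3). I would linearise \eqref{Eq:sg_classical}: the swing equation gives $\Delta\omega_r=\Delta P/(sM+D_m)$ and $\Delta\delta=\Delta\omega_r/s$, while the stator relation, written with $\Delta V=\Delta v_Q+j\Delta v_D$ and $\Delta I=\Delta i_Q+j\Delta i_D$ as $\Delta V = jE_g e^{j\delta_o}\Delta\delta + jx_g\,\Delta I$, together with the definitions \eqref{Eq:cart_2_pol} of $P,Q,\phi,V_n$, maps $(\Delta\delta,\Delta P,\Delta Q)$ to $(\Delta\phi,\Delta V_n)$ with operating-point-dependent coefficients. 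The $1/s$ integrator carried by $\Delta\delta$ is precisely the origin behaviour that makes $\mathcal{N}_s(s)$ non-passive in Property~\ref{Th:sg_classical_non_passive_jacobian}; multiplying by $\frac{s}{1+s\tau}$ cancels it, turning the $P$--$\tilde{\omega}$ channel into $\sim\frac{1}{(1+s\tau)(sM+D_m)}$, finite and equal to $1/D_m$ at $s=0$. I would then form $\mathcal{N}_{sd}^{\mathcal R}(j\Omega)$ and show it is positive semi-definite, anchoring on the limit $\tau\to0$: there $\mathcal{N}_{sd}\to s\,\mathcal{N}_s$, so $\mathcal{N}_{sd}^{\mathcal R}(j\Omega)=j\Omega\big(\mathcal{N}_s(j\Omega)-\mathcal{N}_s^{H}(j\Omega)\big)$ is the skew-Hermitian (reactive) part of $\mathcal{N}_s$ rotated into the real axis; the $P$--$\tilde{\omega}$ diagonal entry collapses to the positive-real $\frac{1}{sM+D_m}$ with real part $D_m/(D_m^2+\Omega^2M^2)\ge0$, and the $Q$--$\tilde{V}_n^d$ entry acquires the same derivative-of-magnitude form as in the droop case. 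A continuity argument then carries positive semi-definiteness to small $\tau>0$.

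\emph{Main obstacle.} Unlike the droop case the VSG Hermitian part is not diagonal, so positive semi-definiteness of the $2\times2$ matrix demands controlling both diagonal entries \emph{and} the determinant, i.e. bounding the off-diagonal $P$--$V_n$/$Q$--$\phi$ coupling produced by $x_g$ and the operating point $(v_{Do},v_{Qo},i_{Do},i_{Qo})$. The delicate issue is uniformity in $\Omega$: the real part of the $P$--$\tilde{\omega}$ entry scales like $(D_m-\tau M\Omega^2)$, which stays non-negative only while $\Omega^2\le D_m/(\tau M)$, so the contribution of the genuine high-frequency (algebraic) terms of $\mathcal{N}_s$---also multiplied by the positive-real $\frac{s}{1+s\tau}$---must be shown to dominate beyond that crossover. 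Choosing $\tau$ small pushes the crossover above the low-frequency band on which Model~III is employed; establishing the determinant inequality uniformly over that band, with the $\tau$-dependent perturbation of the clean $\tau\to0$ positive-real limit kept under control, is the step I expect to be the crux.
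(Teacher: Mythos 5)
Your reduction $\mathcal{N}_{sd}(s)=\frac{s}{1+s\tau}\,\mathcal{N}_s(s)$ is exactly the paper's, and your droop-case computation coincides with Appendix~\ref{AppSec:passivity_sg_im_p_phi_q_v}.3, so that half is complete. The VSG half, however, stops at precisely the point you flag as the crux, and the missing step is not a delicate determinant estimate but a structural observation you have not made. Write the $\tau\to 0$ limit as $J_d(s)=s\,\mathcal{N}_s(s)$; linearizing \eqref{Eq:sg_classical} gives (with $\zeta_o=\delta_o-\tan^{-1}(v_{Do}/v_{Qo})$)
\begin{align*}
J_d(s)&=\operatorname{diag}\!\left(\tfrac{1}{Ms+D_m},\,0\right)+ s\,\Gamma,\\
\Gamma&=\tfrac{x_g}{V_o(2V_o\cos\zeta_o-E_g)}
\begin{bmatrix}\tfrac{2V_o-E_g\cos\zeta_o}{E_g} & \sin\zeta_o\\[2pt] \sin\zeta_o & \cos\zeta_o\end{bmatrix}=\Gamma^T ,
\end{align*}
i.e.\ every $x_g$- and operating-point-dependent term is $s$ times a \emph{constant real symmetric} matrix. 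On $s=j\Omega$ such a term is skew-Hermitian (lossless), so it cancels identically in $J_d(j\Omega)+J_d^{H}(j\Omega)$. The Hermitian part you propose to bound is therefore $\operatorname{diag}\!\big(\tfrac{2D_m}{D_m^2+M^2\Omega^2},\,0\big)$: diagonal, determinant exactly zero, trace positive, hence positive semi-definite by inspection. There is no off-diagonal $P$--$V_n$/$Q$--$\phi$ coupling to control and no determinant inequality to establish --- the entire $x_g$/operating-point dependence lives in the reactive part. Without this observation your proposal is not yet a proof of the only nontrivial case.

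On your second concern (the real part of the swing-channel term changing sign at $\Omega^2=D_m/(\tau M)$): this is a genuine feature of the exact $\mathcal{N}_{sd}$ for $\tau>0$, and the paper does not resolve it any more rigorously than you do. Its proof writes $\mathcal{N}_{sd}(s)=\frac{1}{1+s\tau}J_d(s)$, establishes passivity of $J_d$ exactly as above, and then argues only that $\mathcal{N}_{sd}\approx J_d$ at low frequencies when $\tau$ is small --- consistent with Model III being invoked only below 10~Hz. So your continuity-in-$\tau$ argument over the low-frequency band is the intended closing step; what must be added is the losslessness of $s\Gamma$, which collapses the limit case to a one-line check.
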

\par \noindent As regards the network, the low frequency transfer function of the network can be related to $J_{LF}$ as $$J_{nd}(s)=J_{LF}\times \frac{1+s\tau}{s}.$$ 
Note that $J_{nd}(s)$ has a simple pole on the $j\Omega$ axis~(at $s=0$), thereby requiring compliance with condition (3) of Section~\ref{Sec:PR_sys1}. If $J_{LF}$ is positive semi-definite  hermitian after augmentation with the $k_{qv}^{c}$ contributed by the shunt devices~(as discussed in Section~\ref{Sec:alternative_variables_analysis}), then $J_{nd}(s)$ will satisfy the conditions of passivity at low frequencies. However, $J_{LF}$ is hermitian only for a lossless network. Since losses are generally small in a T\&D network, they may be neglected in the analysis. In such a situation,  the positive semi-definiteness of $J_{LF}$ implies that $J_{nd}(s)$ satisfies the passivity conditions at low frequency. 

\par The wide band model of the network, $J_{nd}(s)$, is not passive because of the following property.
\begin{theorem} \label{Th:rlc_jacobian_der_non_passive}
A R-L-C network will not satisfy the frequency domain passivity conditions at  higher frequencies, with ($\Delta P$, $ \Delta Q$) and ($\Delta \tilde{\omega}$, $ \Delta \tilde{V}_n^d$) as input-output variables. [See Appendix~\ref{AppSec:proof_rlc_jacobian_non_passive}.2 for the proof]
\end{theorem}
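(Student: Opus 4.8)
The plan is to deduce non-passivity of the Model~III network transfer function $J_{nd}(s)$ directly from the corresponding Model~II result (Theorem~\ref{Th:rlc_jacobian_non_passive}), exploiting the fact that the two network transfer functions differ only by a scalar multiplier. From the definitions in~\eqref{aaa}, the Model~III inputs satisfy $\left[\Delta\tilde{\omega}\;\;\Delta\tilde{V}_n^d\right]^T=\frac{s}{1+s\tau}\left[\Delta\phi\;\;\Delta V_n\right]^T$, while both models produce the same outputs $\left[\Delta P\;\;\Delta Q\right]^T$. Substituting for the polar-voltage inputs gives the exact wide-band relation $J_{nd}(s)=\frac{1+s\tau}{s}\,J_n(s)$, of which the identity $J_{nd}(s)=J_{LF}\frac{1+s\tau}{s}$ quoted earlier is merely the low-frequency specialization ($J_n(s)\approx J_{LF}$).

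Next I would evaluate on the imaginary axis and form the Hermitian part required by condition~(2) of Section~\ref{Sec:PR_sys1}. Writing the scalar as $c(j\Omega)=\frac{1+j\Omega\tau}{j\Omega}=\tau+\frac{1}{j\Omega}$, the real constant $\tau$ reproduces a positive scaling of $J_n^{\mathcal{R}}$, while the purely imaginary remainder $\frac{1}{j\Omega}$ survives only through the skew-Hermitian part of $J_n$, so that
\begin{align*}
J_{nd}^{\mathcal{R}}(j\Omega)=\tau\,J_n^{\mathcal{R}}(j\Omega)+\frac{1}{j\Omega}\big(J_n(j\Omega)-J_n^{H}(j\Omega)\big).
\end{align*}
The correction is $O(1/\Omega)$: equivalently, writing $c=|c|e^{j\theta}$ with $\theta=-\arctan\frac{1}{\Omega\tau}\to0$, one has $J_{nd}^{\mathcal{R}}=|c|\big(\cos\theta\,J_n^{\mathcal{R}}+2\sin\theta\,(jK)\big)$, where $jK$ is $j$ times the skew-Hermitian part of $J_n$ (itself Hermitian) and $|\sin\theta|=(1+\Omega^2\tau^2)^{-1/2}$.

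I would then invoke Theorem~\ref{Th:rlc_jacobian_non_passive}, which guarantees a high-frequency band where $J_n^{\mathcal{R}}(j\Omega)$ has a strictly negative eigenvalue with some unit eigenvector $v$. Testing $J_{nd}^{\mathcal{R}}$ against the same $v$ gives $v^H J_{nd}^{\mathcal{R}} v=|c|\big(\cos\theta\,(v^H J_n^{\mathcal{R}} v)+2\sin\theta\,v^H(jK)v\big)$, whose leading term is strictly negative since $\tau>0$. By the Rayleigh (min--max) characterization, a negative value of this quadratic form forces $J_{nd}^{\mathcal{R}}(j\Omega)$ to possess a negative eigenvalue, so condition~(2) fails and $J_{nd}(s)$ is non-passive at high frequencies.

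The hard part will be controlling the correction term, because $\frac{1}{j\Omega}\big(J_n-J_n^{H}\big)$ does not vanish outright: for a capacitive R-L-C branch the entries of $J_n(j\Omega)$ themselves grow like $\Omega$, so the skew-Hermitian part has norm $O(\Omega)$ and dividing by $\Omega$ leaves an $O(1)$ residual. The resolution is that the comparison is relative---the dominant negative eigenvalue of $\tau J_n^{\mathcal{R}}$ grows at the same rate $O(\Omega)$, so the $O(1)$ correction is asymptotically negligible against it. Making this precise calls for reusing the explicit high-frequency expansion of $J_n(j\Omega)$ for an R-L-C element already derived for Theorem~\ref{Th:rlc_jacobian_non_passive} in Appendix~\ref{AppSec:proof_rlc_jacobian_non_passive}.1; substituting that expansion into the relation above and letting $\Omega\to\infty$ confirms that the negative eigenvalue persists, completing the proof.
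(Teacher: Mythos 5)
Your argument is correct and is essentially the paper's: both reduce the claim to Theorem~\ref{Th:rlc_jacobian_non_passive} through the scalar multiplier $\tfrac{1+s\tau}{s}$, whose high-frequency limit is $\tau>0$, so the indefinite symmetric part of the Model~II feedthrough carries over --- the paper compresses this to the single line $D_{jd}=\tau D_j$ and invokes the time-domain condition $D+D^{T}=W^{T}W\succeq 0$. Your closing worry about $O(\Omega)$ growth of $J_n(j\Omega)$ is moot under the paper's standing assumption that $Y_n(s)$, and hence $J_n(s)$, is proper with a finite feedthrough matrix (Appendix~\ref{AppSec:proof_rlc_jacobian_non_passive} contains no high-frequency expansion to ``reuse,'' only $D_j$ itself), so the correction term in your decomposition is genuinely $O(1/\Omega)$ and the Rayleigh-quotient step closes without further work.
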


\section{Summary and Discussions}
    \par (a) The electrical T\&D network consisting of transmission lines, transformers, inductors and  capacitors is passive when Model I is used. Prior work indicates that the devices connected to the passive network can be made to satisfy the passivity conditions in the higher frequency range through controller modifications or by inclusion of some part of the passive external network. It is, however,  impossible for most active and reactive power injection devices to comply with the passivity conditions in the lower frequency range, although the system may be stable after their connection in the network.
    \par (b) With Model II, the passivity conditions are achievable in the low frequency range by models of devices that follow the  droop control strategy. Although the low frequency network model  may not be passive, it can be passivated by including within it some devices with voltage regulating ability (Q-V droop).  However, it may be impossible for a high frequency network model to achieve the passivity conditions with these interface variables, although the overall system may be stable. Devices with synchronous  machine-like characteristics are also not passive in these interface variables.
    \par (c) With Model III, the devices with droop control characteristics or  synchronous machine characteristics can be made to satisfy the passivity conditions in the low frequency range. As in the previous case, voltage control can passivate the low frequency model of the network, but the high frequency model would still be non-passive. Frequency regulation avoids the presence of repeated poles at $s=0$ in Model I, which is required for inferring its stability from that of Model III.
    \par (d) Thus, Model I and Model III have  mutually exclusive frequency ranges in which the compliance with the passivity conditions is achievable. However, passivity cannot be achieved  by a single wide-band model representation, regardless of the stability of the system. A decoupled analysis for the low and high frequency approximations of the system could salvage this situation: the passivity based conditions could be applied to Model I at higher frequencies, while they could be applied to Model III at lower frequencies.
    \par (e) Decoupled stability analysis of  low and high frequency (slow and fast) approximations of the wide-band models  is mathematically justifiable if they are non-interacting. A necessary condition for this is that the eigenvalues of the system should be separable into slow and fast clusters in the complex plane. The stability of the system with time-scale separated dynamics can be predicted from the independent analysis of the eigenvalue clusters~\cite{kokotovic1999singular}.
    \par (f) Slow-fast separation of dynamic behaviour is not an unreasonable requirement. In fact, most power system components have this inherent feature. Generally, network resonant frequencies are well removed from the modal frequencies associated with electro-mechanical dynamics~\cite{kundur1994power, padiyar_psdc}. Controllers of power electronic converters generally have a hierarchical structure which has faster ``inner loops'' associated with current control and firing angle generation, and relatively slower ``outer  loops'' for voltage regulation and power modulation. It is commonly observed in power system studies that the slow-fast eigenvalue separation of the individual devices and the network is preserved after they are interconnected. Hence  the separability of eigenvalues of individual components is proposed here as an {\em additional} requirement to facilitate decoupled analysis.
        \par Based on the foregoing discussions, the local stability criteria are consolidated and presented below.
\begin{enumerate}[leftmargin=*,label=(\arabic*)]
\item Obtain the frequency response of the D-Q admittance matrix $Y_{s}(s)$ of the device. This may either be derived analytically  or  extracted numerically   using the frequency scanning technique~\cite{mkdas_network_statcom_interaction}. The scanning may be done over a wide frequency range that is larger than the controller bandwidth.  
\item Obtain a  state space model corresponding to $Y_{s}(s)$. A vector fitting algorithm~\cite{gustavsen1999rational} may be used for this purpose if the frequency response $Y_{s}(s)$ has been extracted numerically.
\item Ensure that the eigenvalues of the wide band model are well-separated, say with no eigenvalues having absolute values between 62.8~rad/s~(10~Hz) and 220~rad/s~(35 Hz). 
\item Ensure that the admittance model~(Model I) satisfies condition (2) of Section~\ref{Sec:PR_sys1} in the higher frequency range~($\geq 35$~Hz). 

 \item Obtain $J_s(s)$ from $Y_s(s)$ using~\eqref{Eq:j_y_z_relation}. Ensure that the real part of the (2,2) entry of $J_{s}(j\Omega)$ is $\geq k_{qv}^{c}$ in the lower frequency range~($\leq$~10~Hz). $k_{qv}^{c}$ is  a value specified by the network operator, as discussed in Section~\ref{Sec:Model2}. 
\item After extracting $k_{qv}^{c}$, evaluate the transfer function matrix $J_{sd}(s)$, and subsequently evaluate $\mathcal{N}_{sd}(s)$. Ensure that the real part of the (1,1) term of $J_{sd}(s)$ is positive in the lower frequency range~($\leq$~10~Hz)  to  avoid the presence of repeated poles at $s=0$ in the overall system, as discussed in Section~\ref{Sec:Model3}.

\item To ensure properness of $\mathcal{N}_{sd}(s)$, $(\mathcal{E} D + \mathcal{C})\mathcal{F}$ should be non-singular, where $D$ is the feedforward matrix of the state-space model obtained in (1), and $\mathcal{E}, \mathcal{C}$ and $\mathcal{F}$ are as defined in~\eqref{Eq:vc_ic_vt}.

\item Ensure that $\mathcal{N}_{sd}(s)$ satisfies condition (2) of Section~\ref{Sec:PR_sys1} in the lower frequency range~($\leq 10$~Hz).
\end{enumerate}     
\par The application of the consolidated decentralized stability criteria is depicted in Figure~\ref{fig:consolidated_strategy_flowchart}. Note that compliance with the specified criteria is attempted through controller modifications. This may require several iterations until all the conditions are achieved for all the credible operating conditions.
\begin{figure}[h]
    \centering
    \includegraphics[width=0.48\textwidth]{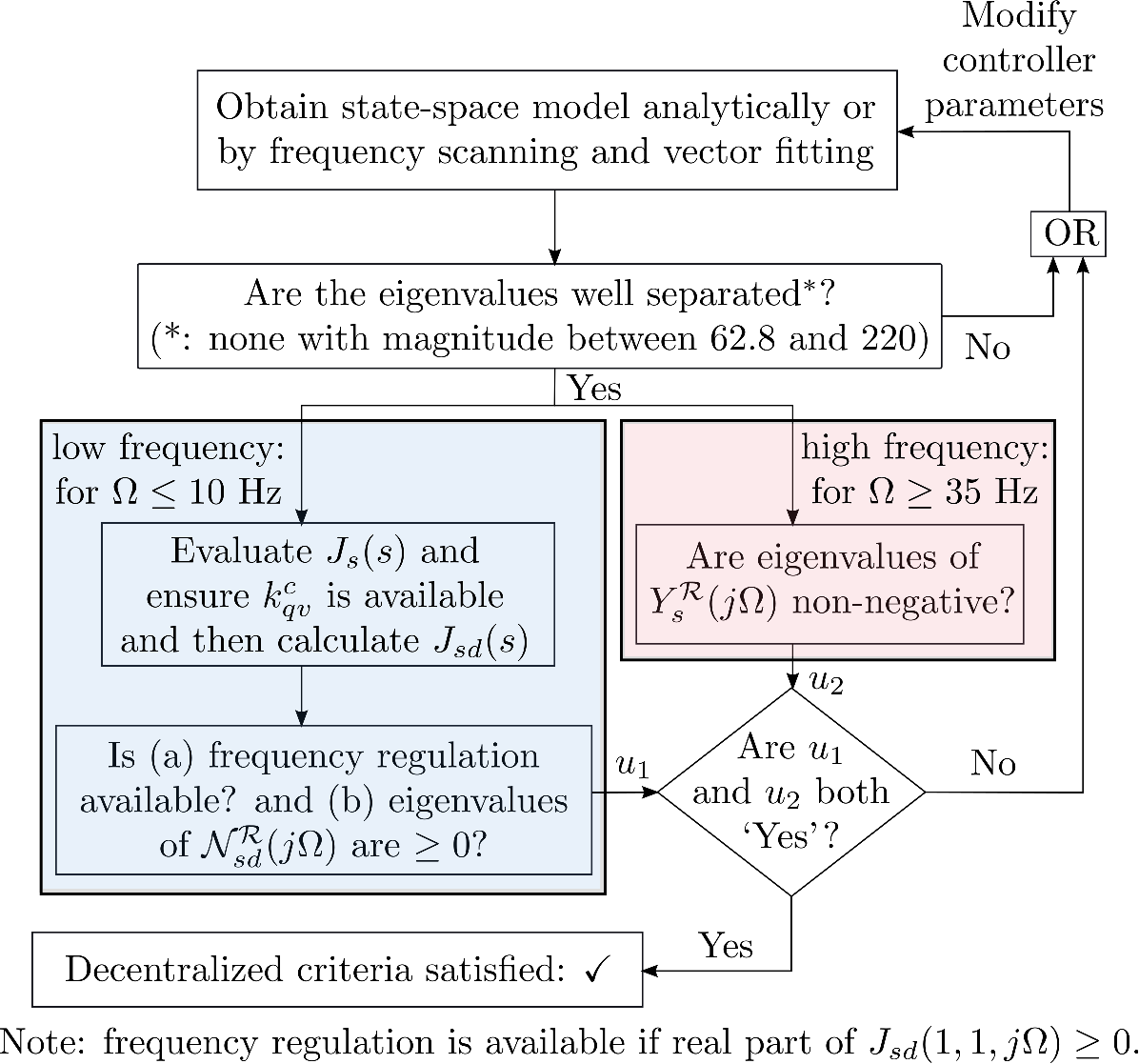}
    \caption{Application of the criteria for stability}
    \label{fig:consolidated_strategy_flowchart}
\end{figure}

\par(g) The criteria given here are not intended to be device-specific.  However, it may not be possible to modify the controllers of some devices due to legacy issues. The dynamic characteristics of some devices may also be inherently  passivity-resistant in the Model I and/or Model III formulations. For example, the transfer function  $\mathcal{N}_{sd}(s)$ of certain loads are not passive~(see Appendix~\ref{AppSec:passivity_sg_im_p_phi_q_v}.4 for proof). In such  scenarios, one may attempt to apply the criteria at the  boundaries of  a {\em sub-system} consisting of several devices and a part of the network. This may be done with the expectation that some passive devices will be able to compensate for the non-passivity of the others. If this cannot be achieved, then a  conventional stability study of the interconnected system will become necessary. With the increased penetration of converter-interfaced devices with flexible controllers that can satisfy the local criteria, the need for such centralized small-signal stability studies can however be deferred or minimized.
\section{Case Study: VSC based power injection}
Consider a 200~MVA VSC based device connected to a T\&D network via a transformer having leakage reactance equal to 0.15~pu. The dc side of the device is connected to a battery. The device is capable of exchanging both  active and reactive power with the grid. The schematic of the controller is shown in Figure~\ref{fig:vsc_controller_schematic}. The controller parameters are expressed in per unit. The current injected by the device is controlled by the well-known ``vector control'' scheme in the local~(d-q) frame of reference~\cite{schauder1993vector}.
\begin{figure}[h]
    \centering
    \includegraphics[width=0.49\textwidth]{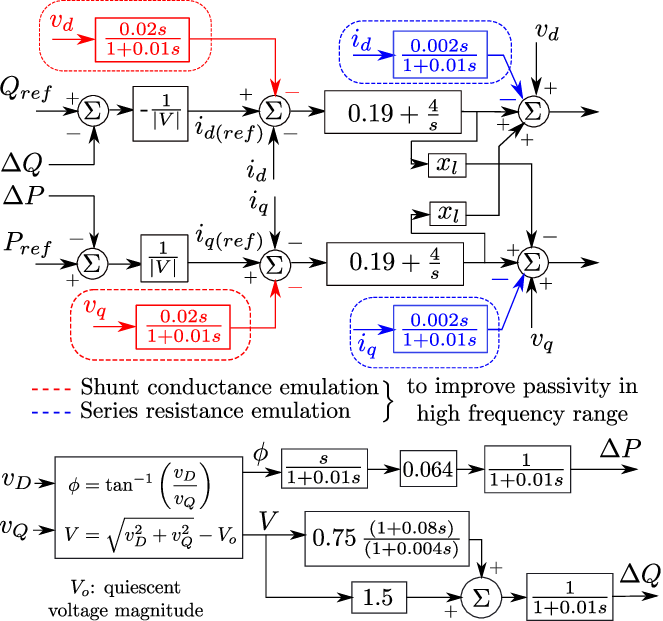}
    \caption{Schematic of VSC based device controller}
    \label{fig:vsc_controller_schematic}
\end{figure}
The controller has the following supplementary blocks, which are intended to improve the passivity behaviour of the device:\\
(a) Blocks to mimic active series and shunt resistances which are included to improve the passivity of $Y_{s}(s)$ in the higher frequency range~\cite{kaustav_passivity_gm}. These are highlighted in blue and red. \\
(b) Voltage and frequency regulation are provided in the low frequency range, so that $\mathcal{N}_{sd}(s)$ is passive. 

\par The device is operated at two different operating conditions. Case 1 represents an active power injection mode where the device injects active and reactive power of 0.7~pu and 0.25~pu respectively. Case 2 represents a reactive power compensation scheme~(STATCOM application) where the device injects only 0.7~pu reactive power. 
\begin{figure}[h]
    \centering
    \includegraphics[width=0.46\textwidth]{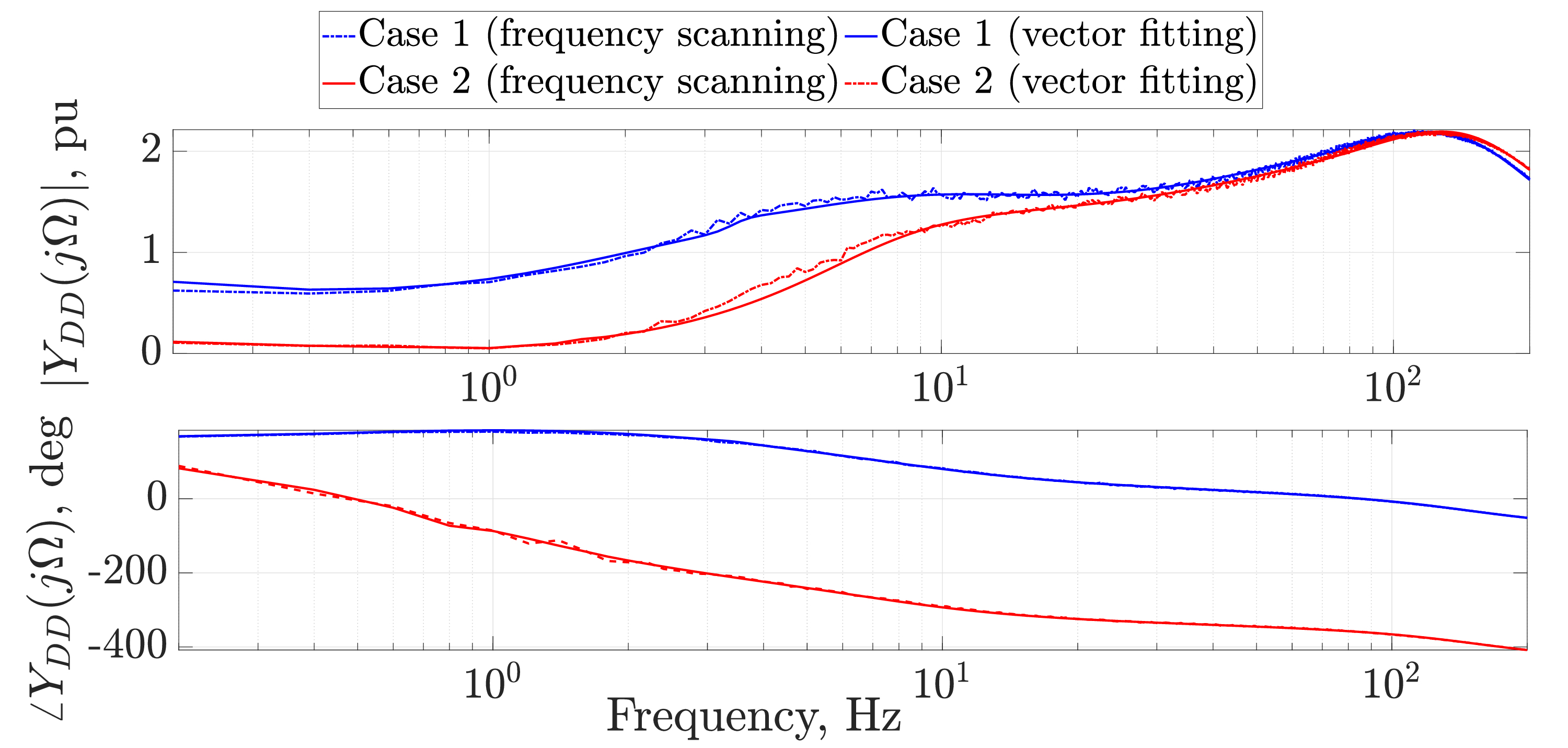}
    \caption{Frequency scanning and vector fitting model}
    \label{fig:ydd_freq_resp}
\end{figure}
\par The frequency response $Y_{s}(s)$ of the VSC based device is   obtained  in the range $(0.2-200)$~Hz  using the frequency scanning technique~\cite{mkdas_network_statcom_interaction}.  The frequency response is fitted  to a rational state-space model using the vector fitting algorithm~\cite{gustavsen1999rational}. The fitting is found to be accurate. To illustrate this,  the frequency response of the (1,1) term of $Y_s(s)$, denoted by $Y_{DD}(s)$, and the same evaluated from the  fitted model are shown in Figure~\ref{fig:ydd_freq_resp}. 
\begin{table}[h]
\begin{center}
\begin{tabular}{|c|c|c|}
\hline
Case & Clusters  & Eigenvalues of fitted state space model                                  \\ \hline
\multirow{2}{*}{\begin{tabular}[c]{@{}c@{}}$P_o = 0.7$~pu,  \\ $Q_o = 0.25$~pu\end{tabular}} &
  Slow &
  \begin{tabular}[c]{@{}c@{}}$-0.99, -14.76, -3.61 \pm j 23.41,$\\$ -48.28 \pm j27.99$\end{tabular} \\ \cline{2-3} 
     & Fast     & $-450.27 \pm j60.69, -575.78 \pm j 756.66$     \\ \hline
\multirow{2}{*}{\begin{tabular}[c]{@{}c@{}}$P_o = 0$,  \\ $Q_o = 0.7$~pu\end{tabular}} &
  Slow &
  \begin{tabular}[c]{@{}c@{}}$-0.89, -0.81 \pm j5.58, -0.23 \pm j 9.57,$\\$-14.36,  -35.37 \pm j 31.06$\end{tabular} \\ \cline{2-3} 
     & Fast     & $-317.44, -541.11, -593.69 \pm j 753.59$ \\ \hline
\end{tabular}
\end{center}
\caption{Eigenvalues of the rational state-space model}
\label{Tab:eig_vsc_vect_fit}
\end{table}
\par The eigenvalues of the fitted rational state-space models are given in Table~\ref{Tab:eig_vsc_vect_fit}. The poles are well-separated, which facilitates the separate analysis of the faster and slower transients. The passivity of the admittance is evaluated and is shown in Figure~\ref{fig:wb_model_passivity_y}. Note that $Y_{s}(s)$ satisfies the passivity conditions in the higher frequency range~($\geq$~32~Hz). 
\begin{figure}[h]
    \centering
    \includegraphics[trim={0cm 8cm 0cm 0cm},clip,width=0.49\textwidth]{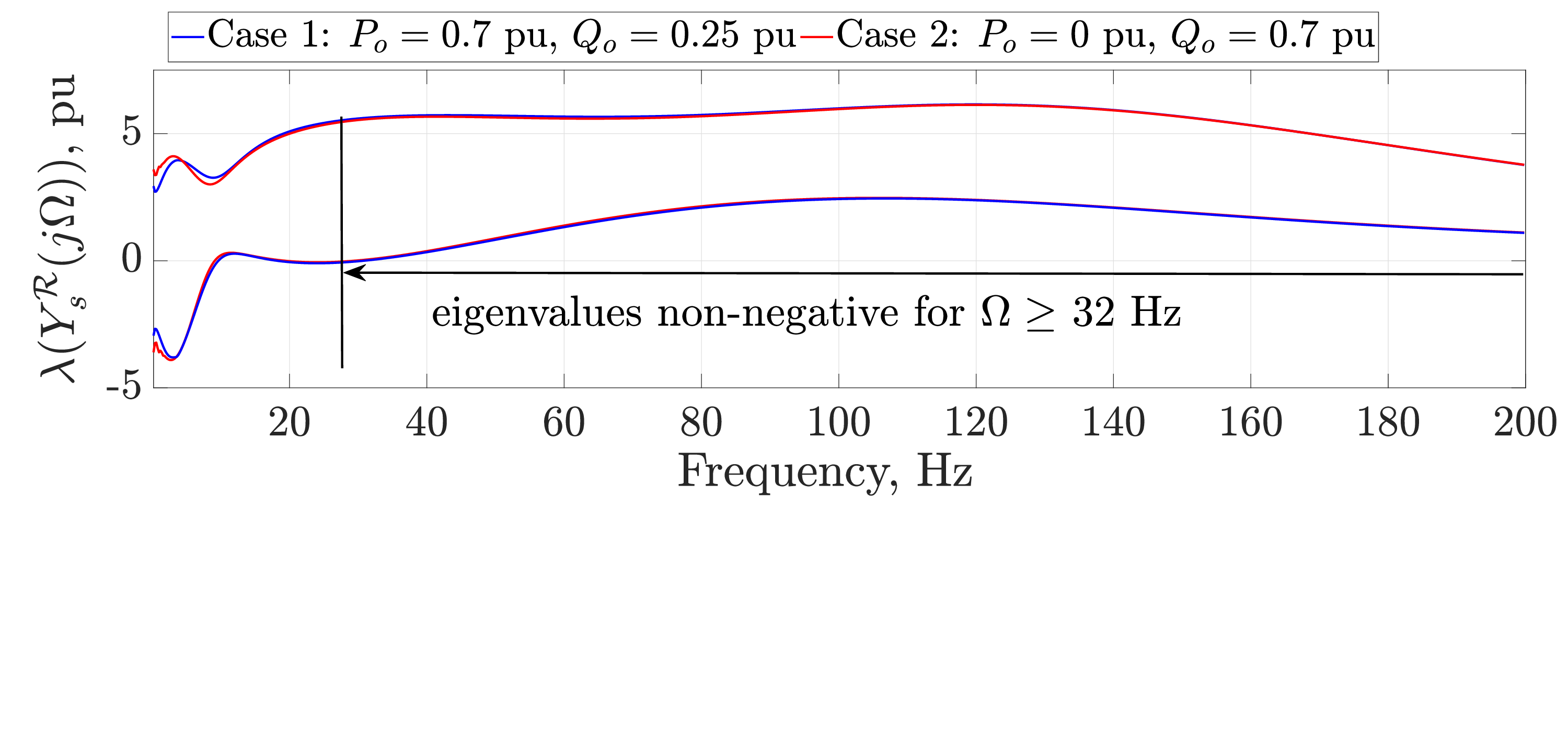}
    \caption{Eigenvalues of $Y_{s}^{\mathcal{R}}(j\Omega)$}
    \label{fig:wb_model_passivity_y}
\end{figure}
\par The minimum amount of $k_{qv}$~(real part of the $(2,2)$ term of $J_{s}(s)$) available to be exported to the T\&D network is shown in Figure~\ref{fig:kqv_availability}. Note that it is capable of exporting at most $k_{qv} = 2$~pu to the network in the low frequency range. 
\begin{figure}[h]
    \centering
    \includegraphics[trim={0cm 5cm 0cm 0cm},clip,width=0.47\textwidth]{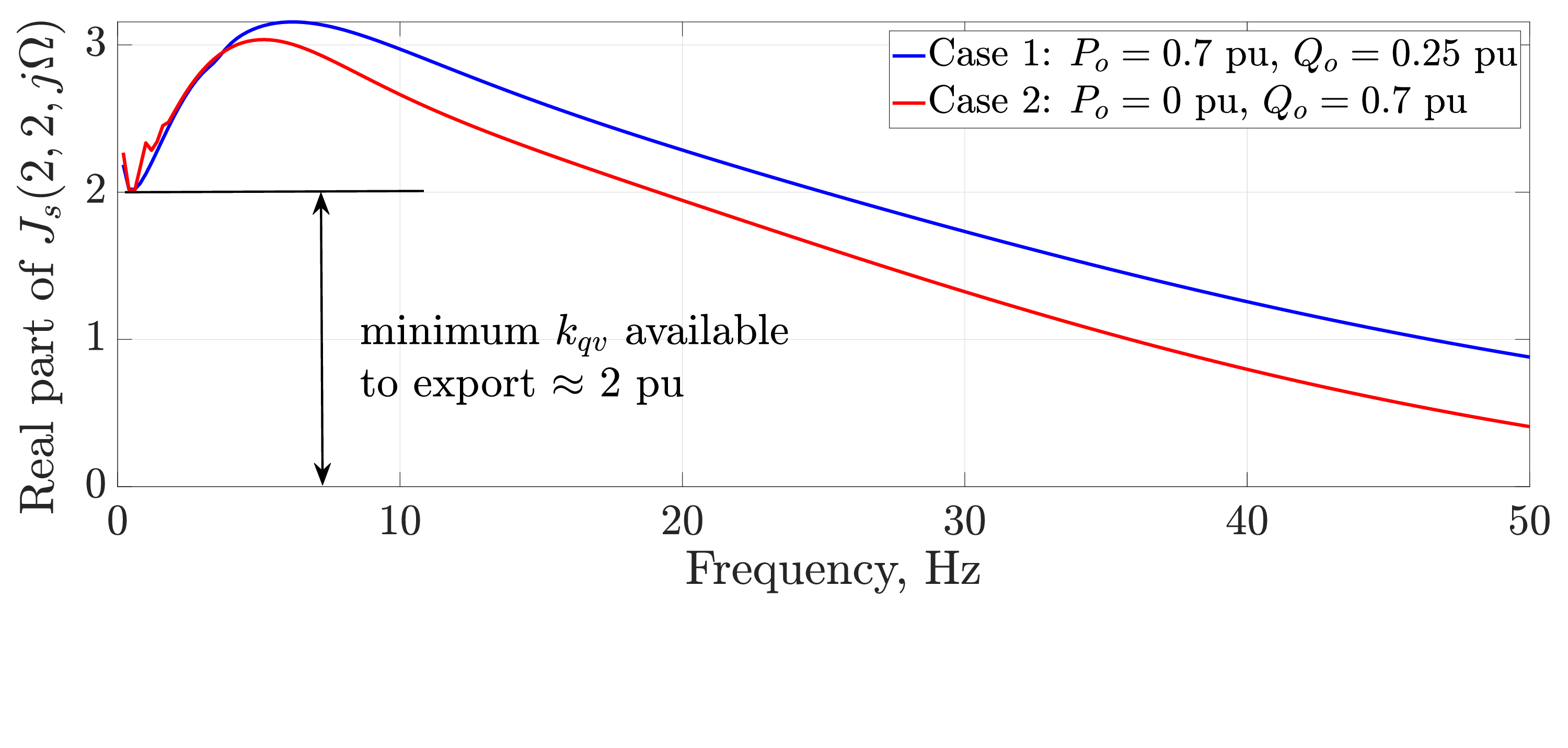}
    \caption{Minimum $k_{qv}$ available to export to the network}
    \label{fig:kqv_availability}
\end{figure}
\par Assume that the network operator has specified that $k_{qv}^{c}$ = 0.4~pu needs to be exported to the network. After extraction of $k_{qv}^{c} = 0.4$~pu from the frequency response, the frequency response matrix $J_{sd}(s)$ is calculated, with $\tau = 0.01$. In order to check the frequency regulation capability in the lower frequency range, the real part of $J_{sd}(1,1,s)$ is plotted in Figure~\ref{fig:kpf_damping}. Note that it is positive in the lower frequency range, which indicates that it can provide frequency regulation.
\begin{figure}[h]
    \centering
    \includegraphics[trim={0cm 3.5cm 0cm 0cm},clip,width=0.47\textwidth]{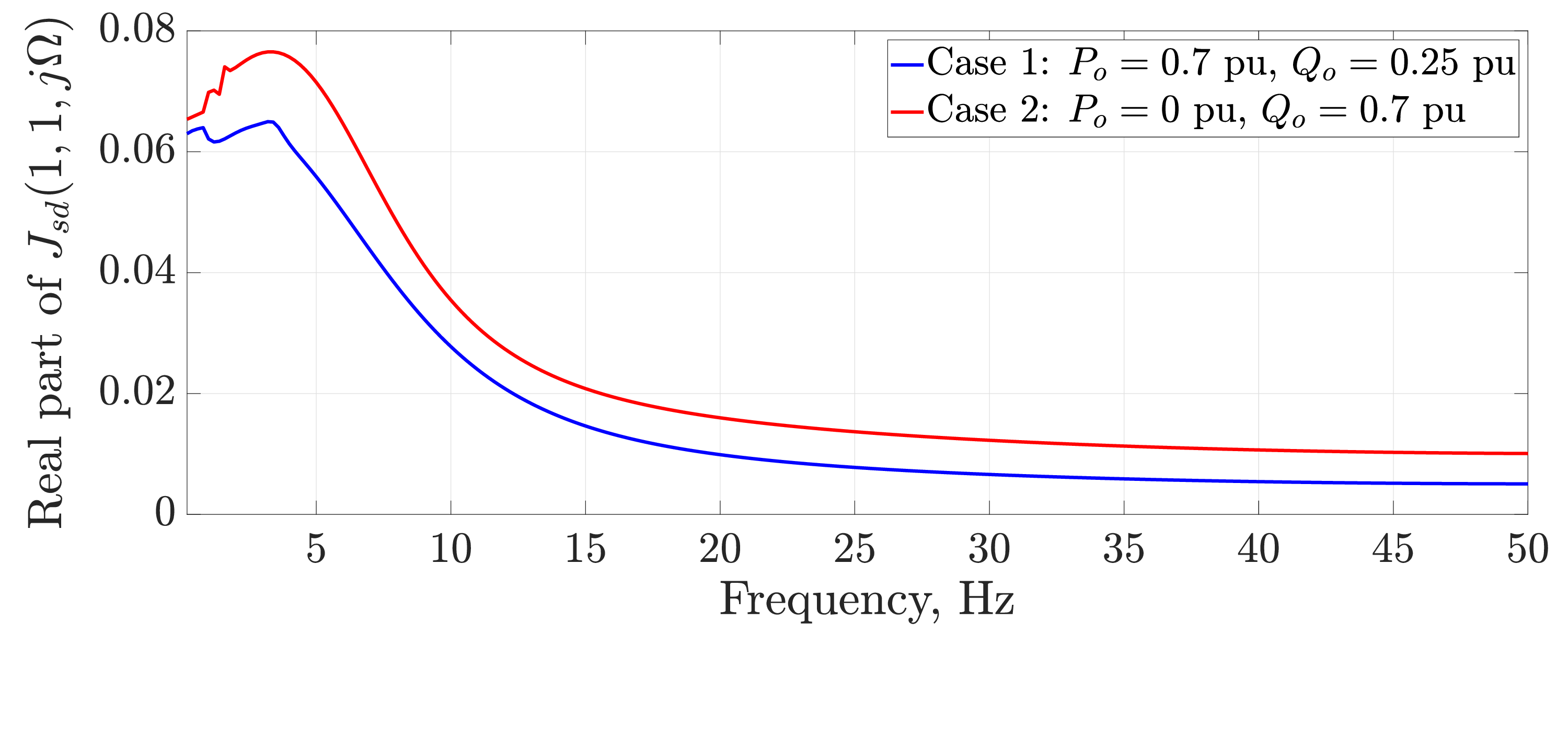}
    \caption{Frequency regulation in the low frequency range}
    \label{fig:kpf_damping}
\end{figure}
\par The passivity of $\mathcal{N}_{sd}(s)$ in the lower frequency range is now verified in Figure~\ref{fig:wb_model_passivity}. It can be seen that the eigenvalues of $\mathcal{N}_{sd}^{\mathcal{R}}(j\Omega)$ are non-negative in the low frequency range, indicating passivity compliance of $\mathcal{N}_{sd}(s)$.
\begin{figure}[h]
    \centering
    \includegraphics[trim={0cm 10cm 0cm 0cm},clip,width=0.49\textwidth]{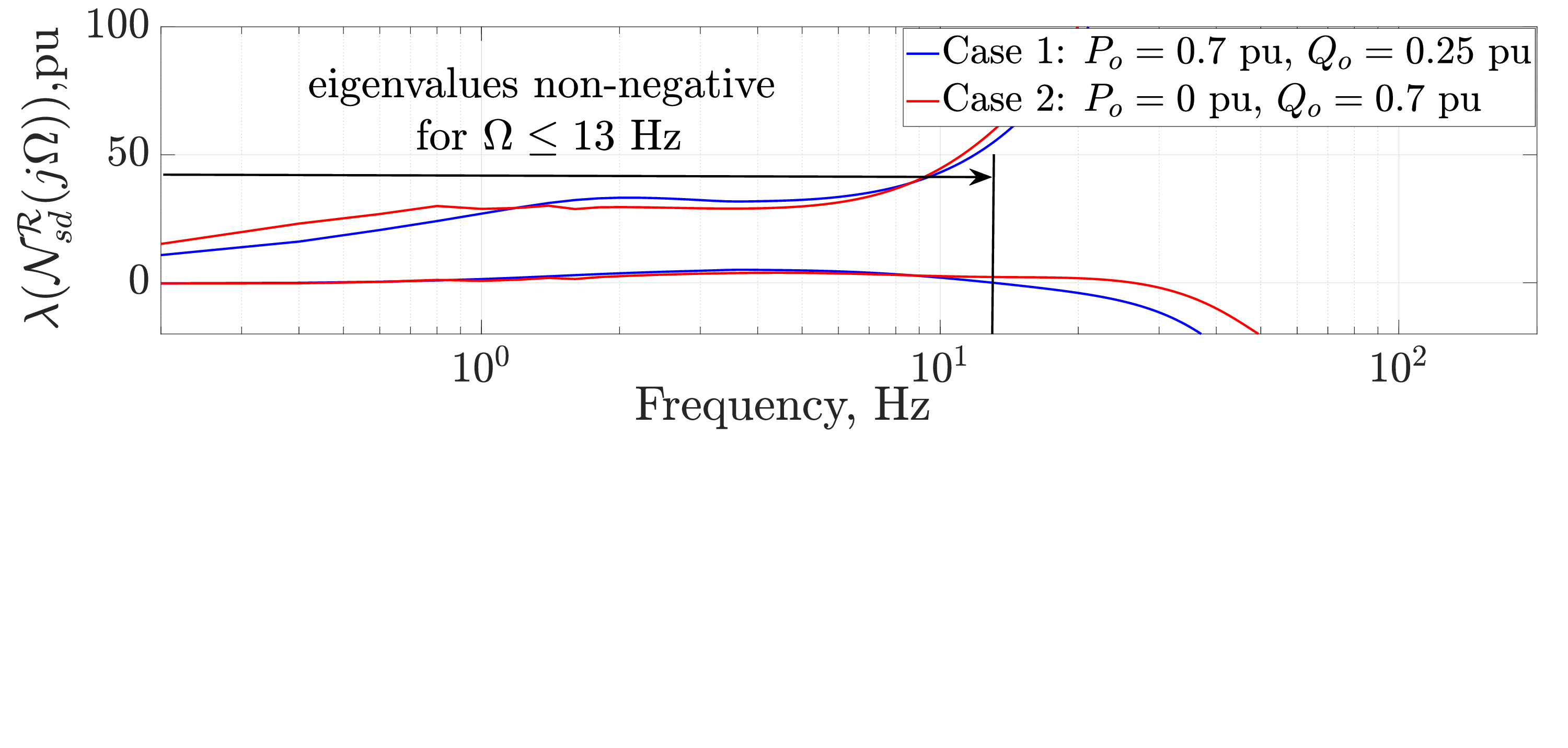}
    \caption{Eigenvalues of $\mathcal{N}_{sd}^{\mathcal{R}}(j\Omega)$}
    \label{fig:wb_model_passivity}
\end{figure}
\begin{figure}[h]
    \centering
    \includegraphics[width=0.49\textwidth]{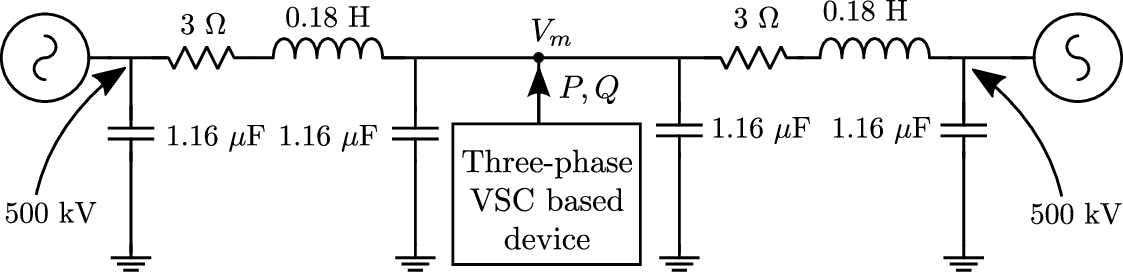}
    \caption{Schematic of the T\&D network}
    \label{fig:vsc_network_schematic}
\end{figure}
\par The closed loop performance of the device is also verified by connecting it to a T\&D network, whose schematic is shown in Figure~\ref{fig:vsc_network_schematic}. A 5\% pulse change of duration 1~s is applied to the active power reference at $t = 0.5$~s, and to the reactive power reference at $t = 2$~s. The responses of the VSC based device are shown in Figure~\ref{fig:vsc_sim_step_gen} and~\ref{fig:vsc_sim_step_stat}. It can be seen that the response is satisfactory with the tuned controller parameters.  \\
\begin{figure}[h]
\centering
    \includegraphics[width=0.49\textwidth]{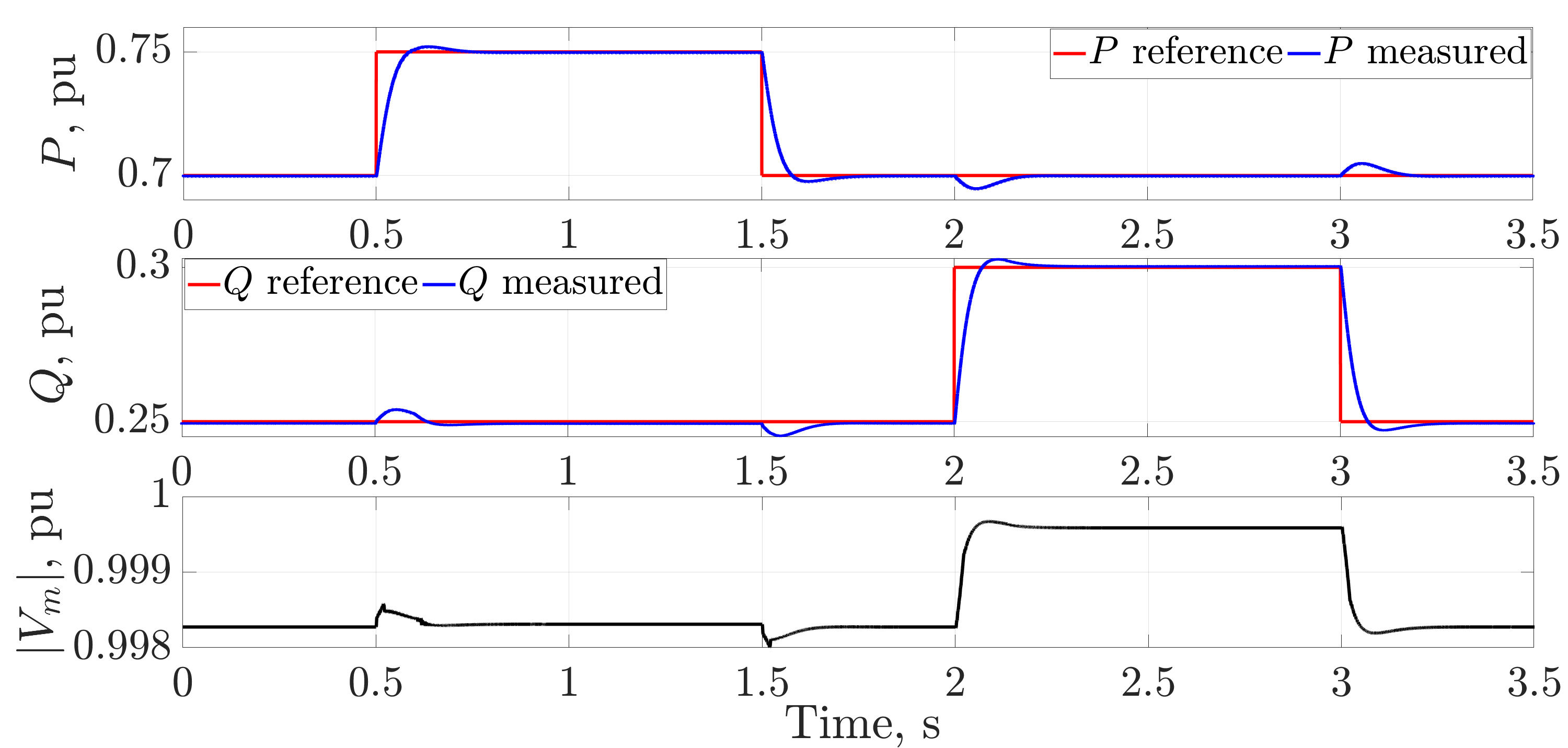}
    \caption{Case 1: 5\% pulse change in $P_{ref}$ and $Q_{ref}$}
    \label{fig:vsc_sim_step_gen}
\vspace*{5mm}
\centering
    \includegraphics[width=0.5\textwidth]{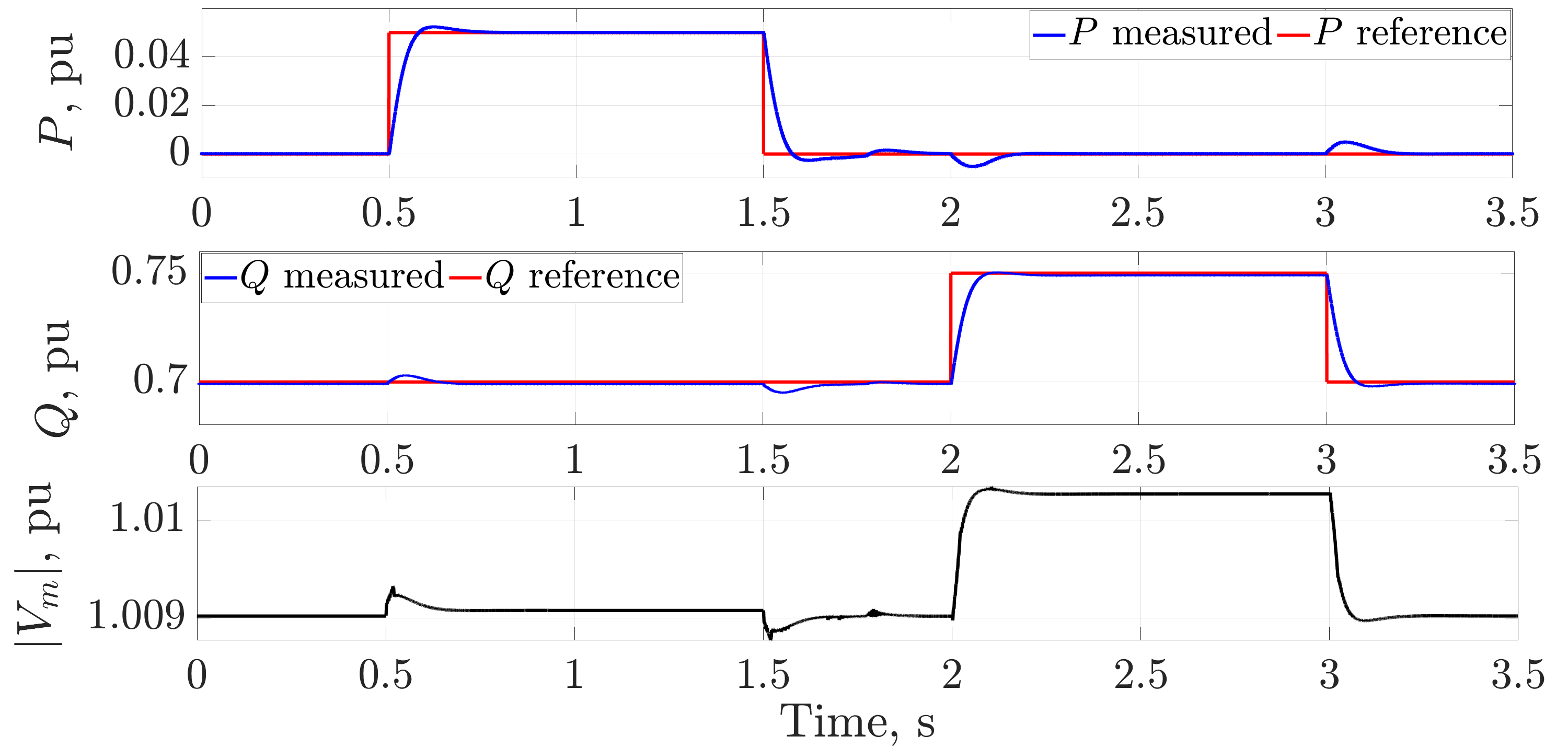}
    \caption{Case 2: 5\% pulse change in $P_{ref}$ and $Q_{ref}$}
    \label{fig:vsc_sim_step_stat}
\end{figure}

\section{Conclusions}
This paper has developed  decentralized  compliance criteria for devices/sub-systems that ensure stability of the interconnected power system. These can be  evaluated locally and individually  for each converter based device with minimal information about the rest of the system.  The proposed criteria are based on passivity. However, passivity alone is unsuitable for wide-band models as they would find it impossible to meet the passivity criterion. Hence additional criteria have to be specified to make the scheme feasible. 
\par The criteria can be summarized as follows: (a) The devices should exhibit well-separated fast-slow dynamics. (b) The model of the devices should be passive in the higher frequency range, when formulated with the D-Q currents and voltages as the interface variables. (c) In the  low frequency range, the devices should be passive when formulated with the active and reactive power, and the derivatives of the polar coordinates of voltage as the interface variables. Complying with the  criteria is not onerous and is easy to verify  in the frequency domain.
\par The  increased penetration  of  converter-interfaced  devices that  comply  with  these  local  criteria  could  potentially  reduce the need for frequently  conducting  centralized connectivity studies to ensure small-signal stability.
\section*{Acknowledgements}
The authors wish to thank Prof. Debasattam Pal of IIT Bombay for sharing his insights on passivity analysis.

\bibliographystyle{IEEEtran}
\bibliography{IEEEabrv,references}

\appendices

\renewcommand{\thesectiondis}[2]{\Alph{section}:}

\renewcommand{\thesubsectiondis}{\arabic{subsection}.}

\section{Passivity of some Models} \label{AppSec:passivity_sg_im_p_phi_q_v}
\subsection{Converters emulating~\eqref{Eq:sg_classical}: Passivity of $\mathcal{N}_{s}(s)$}
The (1,1) element of $\mathcal{N}_{s}(s)$ is as follows.
\begin{align*}
\mathcal{N}_{s}(1,1,s) =  \frac{1}{(Ms^2+D_m s)} + \frac{x_g(2V_o-E_g \cos \zeta_o)}{E_gV_o (2V_o \cos \zeta_o-E_g)} 
\end{align*}
where $\zeta_o = \delta_o - \tan^{-1} \left( \dfrac{v_{Do}}{v_{Qo}} \right)$. \\
{\sc Case I}: $D_m \neq 0$: For typical values of $M,D_m,x_g,E_g,V_o, \delta_o$ and for smaller values of $\Omega$, the (1,1) entry of $\mathcal{N}_{s}^{\mathcal{R}}(j \Omega) =  \mathcal{N}_{s}(j \Omega)+\mathcal{N}_{s}^T(-j \Omega)$ is as follows.
\begin{align*}
\mathcal{N}_{s}^{\mathcal{R}}(1,1,j \Omega) \approx -\frac{2M}{D_m^2} \quad \text{for $\Omega \to 0$}
\end{align*}
Since this diagonal entry is negative, $\mathcal{N}_{s}(s)$ does not satisfy condition (2) of Section~\ref{Sec:PR_sys1} at low frequencies. Therefore $\mathcal{N}_{s}(s)$ is not passive.\\
{\sc Case II}: $D_m  = 0$: In this case, $\mathcal{N}_{s}(1,1,s)$ is as follows.
\begin{align*}
\mathcal{N}_{s}(1,1,s) =  \frac{1}{Ms^2} + \frac{x_g(2V_o-E_g \cos \zeta_o)}{E_gV_o (2V_o \cos \zeta_o-E_g)} 
\end{align*}
This indicates the presence of non-simple~(repeated) poles at $s=0$~(on the $j \Omega$ axis), which violates condition (3) in Section~\ref{Sec:PR_sys1}. Therefore $\mathcal{N}_{s}(s)$ is not passive.


\subsection{Converters emulating~\eqref{Eq:sg_classical}: Passivity of $\mathcal{N}_{sd}(s)$}
It can be shown that $\mathcal{N}_{sd}(s) = J_{d}(s) - \dfrac{s\tau}{1+s\tau}J_d(s)$ 
where
\begin{align}
\begin{aligned}
J_d(1,1,s) &= \frac{1}{(Ms+D_m)} + \frac{sx_g(2V_o-E_g \cos \zeta_o)}{E_gV_o (2V_o \cos \zeta_o-E_g)} \\
J_d(1,2,s) &= J_d(2,1,s) = \frac{s\,x_g\,\sin \zeta_o}{V_o\,(2V_o\, \cos \zeta_o-E_g)} \\
J_d(2,2,s) &= \frac{s\,x_g \,\cos \zeta_o}{V_o\,(2V_o\, \cos \zeta_o-E_g)}
\end{aligned}    
\end{align}
where $\zeta_o = \delta_o - \tan^{-1} \left( \dfrac{v_{Do}}{v_{Qo}} \right)$. The poles of $\mathcal{N}_{sd}(s)$ are $-\frac{D_m}{M}$ and $-\frac{1}{\tau}$, and therefore it is stable.  $J_d^{\mathcal{R}}(j\Omega) = J_d(j\Omega)+J_d^T(-j\Omega)$ is positive semi-definite because (a) the trace is $\frac{2D}{M^2\Omega^2+D^2} > 0$, and (b) the determinant is zero for all $\Omega$. 
\par Note that $\mathcal{N}_{sd}(s) \approx J_d(s)$ in the low frequency range, if $\tau$ is small. Since $J_d(s)$ is passive, it follows that $\mathcal{N}_{sd}^\mathcal{R}(j\Omega)$ will also be positive semi-definite, if $\tau$ is small. Hence $\mathcal{N}_{sd}(s)$ of converters that emulate~\eqref{Eq:sg_classical} is passive if $\tau$ is small.

\subsection{Converters emulating~\eqref{Eq:pf_droop}: Passivity of $\mathcal{N}_{sd}(s)$}
In this case, $\mathcal{N}_{sd}(s)$ is as given below.
\begin{align*}
    \mathcal{N}_{sd}(s) = \begin{bmatrix}
    \frac{1}{k_{pf}} & 0\\0 & \frac{1}{k_{qv}} \times \frac{s}{(1+s\tau)}
    \end{bmatrix}
\end{align*}
Note that $\mathcal{N}_{sd}^{\mathcal{R}}(j\Omega) \geq 0$ for all $\Omega$ if $k_{pf}, k_{qv} > 0$. This indicates that $\mathcal{N}_{sd}(s)$ of a converter that emulates~\eqref{Eq:pf_droop} satisfies the passivity conditions.

\subsection{Voltage and Frequency Dependent Loads}
For a load with load characteristics as given in~\eqref{Eq:load_pq_fv},
\begin{equation} \label{Eq:load_pq_fv}
\Delta P = k_{pf} \Delta \tilde{\omega} + k_{pv} \Delta V_n, \,\,     \Delta Q = k_{qf} \Delta \tilde{\omega} + k_{qv} \Delta V_n
\end{equation}
the transfer function $\mathcal{N}_{sd}(s)$ is as given below.
\begin{align}
    \mathcal{N}_{sd}(s) = \frac{1}{(k_{pf}k_{qv} - k_{pv}k_{qf})} \begin{bmatrix}
    k_{qv} & -k_{pv} \\ -\frac{s k_{qf} }{(1+s \tau)} & \frac{s k_{pf}}{(1+s \tau)}
     \end{bmatrix}
\end{align}
 $\mathcal{N}_{sd}(s)$ is not passive in general. For example, with $k_{pv} = 0.07, k_{qv} = 0.5, k_{pf} = 0.006,$ and $k_{qf} = 0.003$, which are the parameters of a typical industrial motor~\cite{kundur1994power}, $\mathcal{N}_{sd}^{\mathcal{R}}(j\Omega)$ is not positive semi-definite at $\Omega=0$.

\section{Passivity of $R$-$L$-$C$ network} \label{AppSec:proof_rlc_jacobian_non_passive}
Consider the D-Q admittance transfer function be denoted by $Y_{n}(s)$. 
If the state space matrices of the D-Q domain admittance are $(A_n, B_n, C_n, D_n)$, then
\begin{align}
 D_n = \lim_{s \to \infty}  Y_{n}(s) = \begin{bmatrix}
 D_1  & \mathbf{0}  \\ \mathbf{0} & D_1
 \end{bmatrix}
\end{align}
Note that $D_1$ is symmetric. 

\vspace*{-1.5mm}
\subsection{Interface variables: $(\Delta P, \Delta Q)$ -- $(\Delta \phi, \Delta V_n)$} \label{AppSec:subsec_rlc_pass_jacobian}
If the state-space matrices of the transfer function matrix $J_n(s)$ are $(A_j, B_j,C_j,D_j)$, then $D_j$ is as follows.
\begin{equation} \label{Eq:dj_jacobian_rlc}
D_j =  \left( \mathcal{E} D_n + \mathcal{C} \right) \mathcal{F} 
\end{equation}
where $\mathcal{E}, \mathcal{C}$ and $\mathcal{F}$ are defined as given in~\eqref{Eq:vc_ic_vt}. Note that $$D_j + D_j^T = 2\begin{bmatrix}
 (i_{Do}v_{Qo}-i_{Qo} v_{Do}) &  (i_{Do}v_{Do}+i_{Qo} v_{Qo}) \\ (i_{Do}v_{Do}+i_{Qo} v_{Qo}) & - (i_{Do}v_{Qo}-i_{Qo} v_{Do}) 
\end{bmatrix}$$ and therefore its trace is zero, indicating that it is not positive semi-definite. This violates the time domain conditions~(see Section~\ref{Sec:PR_ss}). Therefore the system is not passive.
\vspace*{-1.5mm}
\subsection{Interface variables: $(\Delta P, \Delta Q)$ -- $(\Delta \tilde{\omega}, \Delta \tilde{V}_n^d)$} \label{AppSec:subsec_rlc_pass_jacobian_der}
If the state-space matrices of the transfer function matrix $J_{nd}(s)$ are $(A_{jd}, B_{jd},C_{jd}, D_{jd})$, then $ D_{jd}  =   \tau (D_j) $, where $D_j$ is as given in~\eqref{Eq:dj_jacobian_rlc}. Note that the trace of $D_{jd}+D_{jd}^T$ is zero, which violates the time domain conditions~(see Section~\ref{Sec:PR_ss}). Therefore the system is not passive.
\vskip -2\baselineskip plus -1fil
\begin{IEEEbiographynophoto}{Kaustav Dey} (STM'19) received his B.E. degree in Electrical Engineering from Jadavpur University, India in 2015. He is currently a Ph.D student in the Department of Electrical Engineering at the Indian Institute of Technology, Bombay. His research interests include power system dynamics, applications of signal processing and control systems in power systems.
\end{IEEEbiographynophoto}
 \vskip -2\baselineskip plus -1fil
 \vspace*{-0.15cm}
\begin{IEEEbiographynophoto}{A. M. Kulkarni}
(M'07, SM'19) received his B.E. degree in electrical engineering from the University of Roorkee in 1992. He obtained his M.E. degree in electrical engineering in 1994 and his PhD in 1998 from the Indian Institute of Science, Bangalore. He is currently a Professor at the Indian Institute of Technology, Bombay. His research interests include Power system dynamics, HVDC, FACTS and Wide Area Measurement Systems.
\end{IEEEbiographynophoto}

\end{document}